\def\ps@headings{%
\def\@oddhead{\mbox{}\scriptsize\rightmark \hfil \thepage}%
\def\@evenhead{\scriptsize\thepage \hfil \leftmark\mbox{}}%
\def\@oddfoot{}%
\def\@evenfoot{}}
\makeatother \pagestyle{headings}
\newtheorem{theorem}{Theorem}
\newtheorem{lemma}{Lemma}
\newtheorem{corollary}{Corollary}
\begin{document}
\title{Analysis and Optimization of Probabilistic Caching in Multi-Antenna Small-Cell Networks}
\author{\authorblockN{Xianzhe Xu and Meixia Tao}
\authorblockA{Dept. of Electronic Engineering, Shanghai Jiao Tong University, Shanghai, China\\
Emails: \{august.xxz, mxtao\}@sjtu.edu.cn
}
}
\maketitle
\vspace{-1.5cm}
\begin{abstract}
Previous works on cache-enabled small-cell networks (SCNs) with probabilistic caching often assume that each user is connected to the nearest small base station (SBS) among all that have cached its desired content. The user may, however, suffer strong interference from other SBSs which do not cache the desired content but are geographically closer. In this work, we investigate this issue by deploying multiple antennas at each SBS. We first propose a user-centric SBS clustering model where each user chooses its serving SBS only from a cluster of $K$ nearest SBSs with $K$ being a fixed cluster size. Two beamforming schemes are considered. One is coordinated beamforming, where each SBS uses zero-forcing (ZF) beamformer to null out the interference within the coordination cluster. The other is uncoordinated beamforming, where each SBS simply applies matched-filter (MF) beamformer. Using tools from stochastic geometry, we obtain tractable expressions for the successful transmission probability (STP) of a typical user for both cases in the high signal-to-noise ratio (SNR) region. Tight approximations in closed-form expressions are also obtained. We then formulate and solve the optimal probabilistic caching problem to maximize the STP. Numerical results reveal interesting insights on the choices of ZF and MF beamforming in multi-antenna cache-enabled SCNs.
\end{abstract}
\section{Introduction}
In recent decades, mobile data traffic has grown rapidly due to the proliferation of mobile devices, which imposes heavy pressure on the limited backhaul link in cellular networks. Caching popular contents in small base stations (SBSs) at off-peak time is a promising way to alleviate the backhaul pressure by avoiding repetitive transmissions in the backhaul link. It can also improve user-perceived experience due to the reduced transmission latency.

Many prior works have studied caching strategies in cache-enabled small-cell networks (SCNs) under different optimization objectives. Thus far, there are two commonly used caching strategies. One is the most popular caching (MPC), where each SBS only caches the most popular contents until its cache size is full \cite{Bastug2015,yang2016analysis}. This strategy is suitable for networks where each user can only be associated with its nearest SBS.
The other is the probabilistic caching, where each SBS caches contents randomly with probabilities as design variables \cite{blaszczyszyn2015optimal,chenprobabilistic,chae2016caching,li2016optimization,7544526}. In particular, the work \cite{blaszczyszyn2015optimal} formulates and solves a probabilistic cache placement problem to maximize the cache hit rate using stochastic geometry. It shows that caching contents randomly with the optimized probabilities has significant gain over caching the most popular contents when each user can be covered by multiple SBSs. However, when a user is not served by the nearest SBS, the strong interference from the closer SBSs will severely harm the received signal-to-interference ratio (SIR). It is thus crucial to study interference management schemes with probabilistic caching in SCNs.

Some works have studied the transmission schemes to solve the above interference problem in cache-enabled networks. The authors in \cite{chen2016cooperative} make an initial attempt to apply joint transmission and successive interference cancelation (SIC) in cache-enabled SCNs and study the tradeoff between transmission diversity and content diversity. The authors in \cite{liu2017caching} employ zero-forcing (ZF) beamforming to null interference and optimize the cache placement for maximizing the success probability and the area spectral efficiency.

In this paper, we aim to exploit the benefits of multiple antennas in cache-enabled SCNs when probabilistic caching is adopted. Multiple antennas can be used to cancel interference from neighboring cells with SBS coordination, or only to strengthen the effective channel gain of desired signals without SBS coordination, both yielding significant increase in the received SIR. The work in \cite{chen2016cooperative} is limited to disjoint SBS clustering and the analysis only focuses on a user located at the cluster center. In this work, we consider a typical user and employ user-centric dynamic SBS clustering in cache-enabled SCNs. Each user is only allowed to be associated with a certain number of nearest SBSs and all SBSs cache files independently with identical probabilities. We investigate both ZF and matched-filter (MF) beamforming at each multi-antenna SBS. The main contributions are summarized as follows.

$\bullet$ We propose a user-centric dynamic SBS clustering model in multi-antenna cache-enabled SCNs, where each user is allowed to access one of the SBSs within its cluster. Each SBS adopts ZF-based coordinated beamforming when transmission coordination is allowed within the SBS cluster or MF beamforming otherwise.

$\bullet$ We obtain the closed-form (approximate) upper and lower bounds of the coverage probability of a typical user in both MF and ZF beamforming cases in the high signal-to-noise ratio (SNR) region using tools from stochastic geometry. Our analysis extends that in \cite{lee2015spectral} by allowing each user to be associated with its $k$-th nearest SBS, with $k$ being any integer not larger than the cluster size.

$\bullet$ We formulate an optimal probabilistic caching (OPC) problem to maximize the probability that the typical user successfully receives its requested file locally from the SBSs within its cluster. The problem is shown to be convex and solved by Lagrangian method with closed-form solutions.

$\bullet$ Numerical results demonstrate the gain of OPC over MPC. Results also reveal that when the number of antennas at each SBS is the same as the SBS cluster size, MF outperforms ZF. While when the number of antennas is larger than the cluster size, ZF performs better.

\emph{Notations}: This paper uses bold-face lower-case $\mathbf{h}$ for vectors and bold-face uppercase $\mathbf{H}$ for matrices. $\mathbf{H}^H$ is the conjugate transpose of $\mathbf{H}$ and $\mathbf{H}^\dagger$ is the left pseudo-inverse of $\mathbf{H}$, defined as $\mathbf{H}^\dagger=(\mathbf{H}^H\mathbf{H})^{-1}\mathbf{H}^H$. We use $\mathbf{I}_m$ to denote an $m\times m$ identity matrix and $\mathbf{0}_{1\times m}$ denotes a $1\times m$ zero vector.
\section{System Model}
We consider a multi-antenna cache-enabled SCN, where the locations of SBSs are modeled as a homogenous Poisson point process (HPPP)
$\Phi_b=\{\textbf{d}_i\in\mathbb{R}^2,\forall i\in\mathbb{N^+}\}$ with intensity $\lambda_b$. Each SBS is equipped with $L$ antennas. The locations of users are also modeled as a HPPP with intensity $\lambda_u$, which is independent with $\Phi_b$. Each user is equipped with a single antenna. We assume that $\lambda_u \gg \lambda_b$ so that the network is fully loaded with each SBS serving one user at a time.

We consider a file library $\mathcal{F}=\{f_1,f_2,\cdots,f_N\}$, where $N$ is the total number of files. All files are assumed to have the same normalized size of $1$. The popularity of file $f_n$ is $p_n$, satisfying $0\leq p_n\leq 1$ and $\sum_{n=1}^Np_n=1$. Without loss of generality, we assume $p_1\geq p_2\ldots\geq p_N$. Each SBS has a local cache that can store up to $M$ files with $M<N$. We adopt the \emph{probabilistic caching} strategy, where each SBS caches file $f_n$ with probability $b_n$ independently. Due to the cache size constraint and probability property, we have the constraints: $\sum_{n=1}^N b_n\leq M$ and $0\leq b_n \leq 1$ for $n=1,2,\ldots,N$.
\subsection{User Association Strategy}
In this work, each user is allowed to choose its serving SBS from a cluster of $K$ SBSs that are closest to the user, where $K\geq2$ is a positive integer. We shall refer to the $K$ closest SBSs of each user as the user-centric SBS cluster with size of $K$. When a user submits a file request, the nearest SBS within the cluster that has cached the requested file will serve the user. If none of the $K$ SBSs in the cluster caches the requested file, a macro base station (MBS) will download the file from the core network via backhaul link and then transmit it to the user. By such user-centric SBS clustering, the plane is tessellated into $K$-th order Voronoi cells, denoted as $\mathcal{V}_K(\textbf{d}_1,\cdots,\textbf{d}_K)$. The $K$-th order Voronoi cell associated with a set of $K$ points $\{\textbf{d}_1,\cdots,\textbf{d}_K\}$  is the region that all the points in this region are closer to these $K$ points than to any other point of $\Phi_b$, i.e., $\mathcal{V}_K(\textbf{d}_1,\cdots,\textbf{d}_K)=\{\textbf{d}\in\mathbb{R}^2|\cap_{k=1}^K \{\Vert \textbf{d}-\textbf{d}_k\Vert \leq \Vert \textbf{d}-\textbf{d}_i\Vert\},\textbf{d}_i\in \Phi_b\backslash\{\textbf{d}_1,\textbf{d}_2,\cdots,\textbf{d}_K\}\}$.

Without loss of generality, we focus on a typical user $u_0$ which is located at the origin, and can choose to connect to any of the SBSs in the $K$-th order Voronoi cell that it belongs to, denoted as $\mathcal{C}=\{\textbf{d}_1,\textbf{d}_2,\cdots,\textbf{d}_K\}$. The distance between $u_0$ and the $k$-th nearest SBS $\textbf{d}_k$ is $r_k$. As assumed earlier, the user intensity is much larger than SBS intensity, and hence the network is fully loaded with all the SBSs being active. We consider an interference-limited network where the noise can be neglected. Hence, the received signal of $u_0$ when associated with the $k$-th nearest SBS $\textbf{d}_k$, for $k=1,2,\ldots,K$ is given by:
\begin{equation}
y_0=r_k^{-\frac{\alpha}{2}}\mathbf{h}_{k0}\mathbf{w}_{k}x_{k}+\sum_{j\in\Phi_b\backslash
\{\textbf{d}_k\}}r_j^{-\frac{\alpha}{2}}\mathbf{h}_{j0}\mathbf{w}_{j}x_{j}, \label{eqn:receive}
\end{equation}
where the channel between $u_0$ and its $i$-th nearest SBS $\textbf{d}_i$ is assumed to have both small-scale fading, denoted as $\mathbf{h}_{i0}\in\mathbb{C}^{1 \times L}$, and large-scale fading $r_i^{-\frac{\alpha}{2}}$. The small-scale fading is modeled as Rayleigh fading, i.e., $\mathbf{h}_{i0}\sim \mathcal{CN}(\mathbf{0}_{1\times{L}},\mathbf{I}_{L})$ and the large-scale fading $r_i^{-\frac{\alpha}{2}}$ follows the distance-dependent power law model with $\alpha>2$ being the path loss exponent. $x_i$ and $\mathbf{w}_i\in \mathbb{C}^{L \times 1}$  denote the transmit signal and the associated beamformer vector, respectively.

We consider two types of beamforming design at each SBS. One is uncoordinated, where each SBS applies an MF based beamforming independently to maximize the effective channel gain of its own user. The other is coordinated, where the $K$ SBSs in each $K$-th order Voronoi cell apply ZF beamforming coordinately so that the intra-cluster interference can be nulled out completely. This requires that the number of antennas at each SBS should not be smaller than the cluster size, i.e. $L\geq K$. Note that since we consider a user-centric SBS clustering for coordinated beamforming, all the other $K-1$ users involved in the coordination must also locate in the same $K$-th order Voronoi cell as the typical user. This requirement is enforced to justify the \emph{typicality} of the typical user in our analysis and it can be easily satisfied given that the user intensity is much larger than the SBS intensity. This requirement is not needed for the uncoordinated beamforming though.
\subsection{Matched-Filter Beamforming}
When the typical user is associated with SBS $\textbf{d}_k\in\mathcal{C}$, the MF beamforming vector $\mathbf{w}_{k}$ at $\textbf{d}_k$ is given by:
\begin{equation}
\mathbf{w}_{k,\text{mf}}=\frac{\mathbf{h}_{k0}^H}{\Vert\mathbf{h}_{k0}\Vert}.
\end{equation}

Since each SBS serves its own user independently, the interference for the typical user comes from all SBSs except the serving SBS $\textbf{d}_k$ in the network. Thus, based on (\ref{eqn:receive}), the SIR for $u_0$ is given by:
\begin{equation}
\text{SIR}_{k,\text{mf}}=\frac{g_{k,\text{mf}}\cdot r_k^{-\alpha}}{\sum_{j\in\Phi_b \backslash
\{\textbf{d}_k\}} g_{j,\text{mf}}\cdot r_j^{-\alpha}},
\end{equation}
where $g_{k,\text{mf}}=\Vert\mathbf{h}_{k0}\Vert^2$ is the effective channel gain of the desired signal and follows the Gamma distribution with shape parameter $L$ and scale parameter $1$, denoted as $g_{k,\text{mf}}\sim \Gamma(L,1)$, and $g_{j,\text{mf}}=|\mathbf{h}_{j0}\mathbf{w}_{j,\text{mf}}|^2$ is the effective channel gain of the undesired signal and follows the exponential distribution with parameter $1$, denoted as $g_{j,\text{mf}}\sim \exp(1)$ \cite{jindal2011multi}.
\subsection{Zero-Forcing Beamforming}
In the ZF beamforming vector design, all the SBSs $\mathcal{C}=\{\textbf{d}_1,\textbf{d}_2,\cdots,\textbf{d}_K\}$ in the $K$-th order Voronoi cell $\mathcal{V}_K(\textbf{d}_1,\cdots,\textbf{d}_K)$ that the typical user $u_0$ falls into can coordinate to serve $K$ users (including $u_0$) located in the same $\mathcal{V}_K(\textbf{d}_1,\cdots,\textbf{d}_K)$ without intra-cluster interference \cite{lee2015spectral}. To ensure the feasibility, we assume $L \geq K$ as mentioned earlier. When $u_0$ is associated with SBS $\textbf{d}_k$, the beamforming vector at $\textbf{d}_k$ is given by:
\begin{equation}
\mathbf{w}_{k,\text{zf}}=\frac{\mathbf{(I}_{L}-\mathbf{H}\mathbf{H}^\dagger)\mathbf {h}_{k0}^T}{\Vert\mathbf{(I}_{L}-\mathbf{H}\mathbf{H}^\dagger)\mathbf{h}_{k0}^T\Vert},\label{eqn:zf}
\end{equation}
where $\mathbf{H}=[\mathbf{h}_{k1}^T,\mathbf{h}_{k2}^T,\cdots,\mathbf{h}_{k(K-1)}^T]$ is the channel between SBS $\textbf{d}_k$ and the other $K-1$ users in the coordination group. By (\ref{eqn:zf}), the interference caused by SBS $\textbf{d}_k$ to all other $K-1$ users in the cluster is canceled. The other $K-1$ SBSs from $\mathcal{C}$ adopt the same ZF beamforming strategy. Thus, the interference for the typical user only comes from the SBSs out of the cluster. Therefore, the SIR for the typical user is given by:
\begin{equation}
\text{SIR}_{k,\text{zf}}=\frac{g_{k,\text{zf}}\cdot r_k^{-\alpha}}{\sum_{j\in\Phi_b\backslash \mathcal{C}
}g_{j,\text{zf}}\cdot r_j^{-\alpha}},
\end{equation}
where $g_{k,\text{zf}}=|\mathbf{h}_{k0}\mathbf{w}_{k,\text{zf}}|^2$ is the effective channel gain of the desired signal and follows $g_{k,\text{zf}}\sim \Gamma(L-K+1,1)$, and $g_{j,\text{zf}}=|\mathbf{h}_{j0}\mathbf{w}_{j,\text{zf}}|^2$ is the effective channel gain of the undesired signal and follows $g_{j,\text{zf}}\sim \exp(1)$ \cite{lee2015spectral,jindal2011multi,li2015user}. Note again that the ZF-based coordination is applied to the SBS set $\mathcal{C}$ to serve $K$ users within the same $\mathcal{V}_K(\textbf{d}_1,\cdots,\textbf{d}_K)$ (to justify the typicality of $u_0$). It does not apply to SBSs across different $K$-th order Voronoi cells.

Before concluding this section, we would like to remark that ZF beamforming requires each SBS to know the channel state information (CSI) of all the $K$ users in the cluster, therefore extra signalling overhead is needed. On the other hand, MF beamforming only requires each SBS to know the CSI of its own user.
\section{PERFORMANCE ANALYSIS}
In this paper, we use \emph{successful transmission probability} (STP) of the typical user as the performance metric.

The STP, denoted as $P_{\text{suc}}(K)$, is defined as the probability that the typical user can successfully receive its requested file locally from the cluster of $K$ closest SBSs without resorting to the core network. A file can be successfully and locally transmitted if and only if it is cached and the received SIR exceeds a given SIR target. The probability that the received SIR exceeds a SIR target is called \emph{coverage probability}. Denote $P_{\text{cov}}^{k}(K)$ as the coverage probability of the typical user served by the $k$-th $(k=1,2\ldots K)$ nearest SBS in the cluster and it is given by:
\begin{equation}
P_{\text{cov}}^{k}(K)=P[\text{SIR}_k \geq \gamma],
\end{equation}
where $\gamma$ is the given SIR target.

Therefore, the STP is given by:
\begin{equation}
P_{\text{suc}}(K)=\sum_{n=1}^N p_n{\sum_{k=1}^K b_n(1-b_n)^{k-1}P_{\text{cov}}^{k}(K)}, \label{eqn:suc}
\end{equation}
where $p_n$ is the request probability of file $f_n$ and $b_n$ is the cache probability of file $f_n$.

Our goal is to analyze $P_{\text{suc}}(K)$ with MF and ZF beamforming, respectively, then optimize the cache probabilities for both schemes and finally compare their performances.

In the rest of this section, we analyze the coverage probability $P_{\text{cov}}^{k}(K)$ in both MF and ZF beamforming schemes. Note that the works \cite{lee2015spectral,
li2015user,di2015stochastic} have studied the SIR distribution in multi-antenna networks with coordinated beamforming using stochastic geometry tools. In particular, \cite{di2015stochastic} obtains the characteristic function of other-cell interference when both base stations and users are equipped with multiple antennas. The work \cite{li2015user} obtains a tractable and approximate expression of the complementary cumulative distribution function (CCDF) of SIR when the user is served by the nearest SBS only. The work \cite{lee2015spectral} further obtains closed-form upper and lower bounds of the CCDF of SIR. Our analysis differs from the above references in that each user is allowed to be served by the $k$-th nearest SBS, for $k=1,2,\ldots,K$, which makes our problem more challenging and more general.
\subsection{Coverage Probability with MF Beamforming}
\begin{lemma} \label{lemma:1}
The coverage probability of the typical user served by the $k$-th nearest SBS with MF beamforming is:
\begin{align}
P_{\text{cov,mf}}^{k}(K)=\mathbb{E}_{r_k}\left[\sum_{i=0}^{L-1}\frac{(-\gamma {r_k}^\alpha)^i}{i!}\mathcal{L}_{I_r}^{(i)}(\gamma {r_k}^\alpha)\right],\label{eqn:coverage}
\end{align}
where $I_{r}=\sum_{j\in\Phi_b\backslash\{\textbf{d}_k\}}g_{j,\text{mf}}\cdot r_j^{-\alpha}$, $\mathcal{L}_{I_r}(s)=\mathbb {E}[e^{-sI_r}]$ is the Laplace transform of $I_{r}$, which is given by:
\begin{align}
\mathcal{L}_{I_r}(s)&=\left(\int_0^{r_k} \frac{1}{1+sr^{-\alpha}}\frac{2r}{r_k^2}dr\right)^{k-1} \nonumber\\ &~~~\times\exp\left(-2\pi\lambda_b\int_{r_k}^{\infty}\frac{sr^{-\alpha}}{1+sr^{-\alpha}}rdr\right),
\end{align}
and $\mathcal{L}_{I_r}^{(i)}(s)$ is the $i$-th order derivative of $\mathcal{L}_{I_r}(s)$.
\end{lemma}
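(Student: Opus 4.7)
The plan is to first condition on the distance $r_k$ from the typical user to its serving SBS and then separately handle the desired-signal randomness and the interference randomness. The desired signal gain $g_{k,\text{mf}}$ has the $\Gamma(L,1)$ distribution, so its complementary CDF admits the closed form
\begin{equation}
\Pr[g_{k,\text{mf}}\geq x]=\sum_{i=0}^{L-1}\frac{x^i e^{-x}}{i!}, \qquad x\geq 0.
\end{equation}
Rewriting the coverage event $\text{SIR}_{k,\text{mf}}\geq\gamma$ as $g_{k,\text{mf}}\geq\gamma r_k^{\alpha} I_r$ and conditioning on $(r_k,I_r)$, this yields $\sum_{i=0}^{L-1}\frac{(\gamma r_k^{\alpha} I_r)^i}{i!}e^{-\gamma r_k^{\alpha}I_r}$. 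Taking the expectation over $I_r$ next and using the standard identity $\mathbb{E}[I_r^{i}e^{-sI_r}]=(-1)^{i}\mathcal{L}_{I_r}^{(i)}(s)$ (valid because differentiation under the expectation is justified by dominated convergence for this integrable, positive random variable), one obtains precisely the claimed formula (\ref{eqn:coverage}) after taking a final expectation over $r_k$.

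The remaining task is to derive the Laplace transform $\mathcal{L}_{I_r}(s)$ in the stated split form. I would decompose the interference $I_r=\sum_{j\in\Phi_b\setminus\{\mathbf{d}_k\}}g_{j,\text{mf}}r_j^{-\alpha}$ into an \emph{inner} contribution from the $k-1$ SBSs lying inside the open ball $B(0,r_k)$ and an \emph{outer} contribution from the SBSs at distance greater than $r_k$. The small-scale gains $\{g_{j,\text{mf}}\}_{j\neq k}$ are i.i.d.\ $\exp(1)$, independent of $\Phi_b$, and for each interferer at distance $r$ we use $\mathbb{E}_{g}[e^{-sgr^{-\alpha}}]=1/(1+sr^{-\alpha})$.

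For the outer sum, conditioning on $r_k$ turns the process of outer SBSs into a PPP of intensity $\lambda_b$ on $\{r>r_k\}$ by the strong Markov / independence property of the PPP; applying the probability generating functional of a PPP then gives the exponential factor $\exp\bigl(-2\pi\lambda_b\int_{r_k}^{\infty}\frac{sr^{-\alpha}}{1+sr^{-\alpha}}r\,dr\bigr)$. For the inner sum, the key fact to invoke is the well-known conditional distribution result for a PPP: given that the $k$-th nearest point is at distance $r_k$, the $k-1$ closer points are i.i.d.\ uniformly distributed in $B(0,r_k)$, with radial density $2r/r_k^{2}$ on $[0,r_k]$. Independence between these $k-1$ locations, together with independence of the $\{g_{j,\text{mf}}\}$, then yields the product form $\bigl(\int_{0}^{r_k}\frac{1}{1+sr^{-\alpha}}\frac{2r}{r_k^{2}}dr\bigr)^{k-1}$. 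Multiplying the two contributions gives the stated expression for $\mathcal{L}_{I_r}(s)$.

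The main obstacle is the careful handling of the inner interference term: unlike the classical single-tier analysis where one only keeps a PPP outside the serving-SBS ball, here the $k-1$ closer interferers must be accounted for with the correct conditional law given $r_k$. Once the i.i.d.-uniform characterization is invoked, the rest of the derivation is routine. No further simplification of the derivatives $\mathcal{L}_{I_r}^{(i)}(\cdot)$ or of the $r_k$-expectation is required for the statement of Lemma~\ref{lemma:1}, so the proof terminates at the expression above.
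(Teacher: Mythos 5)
Your proposal is correct and follows essentially the same route as the paper's Appendix A: the Gamma CCDF series expansion for $g_{k,\text{mf}}\sim\Gamma(L,1)$, the identity $\mathbb{E}[I_r^{i}e^{-sI_r}]=(-1)^{i}\mathcal{L}_{I_r}^{(i)}(s)$, and the split of $I_r$ into the $k-1$ inner interferers (i.i.d.\ uniform on $\mathcal{B}(0,r_k)$ given $r_k$, yielding the $(\cdot)^{k-1}$ factor) plus the outer PPP handled via its probability generating functional. No substantive differences to report.
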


\begin{proof}
See Appendix A.
\end{proof}

With the expression (\ref{eqn:coverage}), we still need to obtain the expectation over $r_k$. The probability density function (pdf) of $r_k$ is given in Lemma 2.
\begin{lemma}[\cite{haenggi2005distances}]
Given a HPPP in the 2-dimensional plane with intensity $\lambda_b$, the distance $R_k$ of the $k$-th nearest point to the origin follows the generalized Gamma distribution:
\begin{equation}
f_{R_k}(r)=\frac{2\left(\lambda_b\pi r^2\right)^k}{r\Gamma(k)}\exp\left(-\lambda_b \pi r^2\right).
\end{equation}
\end{lemma}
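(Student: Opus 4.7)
The plan is to derive the pdf from the cumulative distribution function of $R_k$ using the classical void/counting property of an HPPP. First I would observe that $R_k \leq r$ if and only if at least $k$ points of $\Phi_b$ lie in the disk $B(0,r)$ of radius $r$ centered at the origin. Since $\Phi_b$ is a homogeneous PPP with intensity $\lambda_b$, the random count $N(B(0,r))$ is Poisson-distributed with mean $\lambda_b \pi r^2$. Hence
\begin{equation*}
F_{R_k}(r) \;=\; \Pr\{N(B(0,r)) \geq k\} \;=\; 1 - \sum_{j=0}^{k-1}\frac{(\lambda_b \pi r^2)^j}{j!}\,e^{-\lambda_b \pi r^2}.
\end{equation*}

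Next I would differentiate this expression with respect to $r$. Setting $u(r) = \lambda_b \pi r^2$ so that $u'(r) = 2\lambda_b \pi r$, the derivative of each term $\frac{u^j}{j!}e^{-u}$ equals $\bigl(\frac{u^{j-1}}{(j-1)!} - \frac{u^j}{j!}\bigr)e^{-u}\,u'(r)$ for $j \geq 1$, while the $j=0$ term contributes $-e^{-u}\,u'(r)$. The sum $\sum_{j=0}^{k-1}$ therefore telescopes, and all terms cancel except $-\frac{u^{k-1}}{(k-1)!}e^{-u}\,u'(r)$. Flipping the sign (because $f_{R_k}(r) = F'_{R_k}(r)$ involves the negative of the derivative of the tail sum), one obtains
\begin{equation*}
f_{R_k}(r) \;=\; \frac{(\lambda_b \pi r^2)^{k-1}}{(k-1)!}\,e^{-\lambda_b \pi r^2}\cdot 2\lambda_b\pi r \;=\; \frac{2(\lambda_b \pi r^2)^k}{r\,\Gamma(k)}\,e^{-\lambda_b \pi r^2},
\end{equation*}
where I used $(k-1)! = \Gamma(k)$ and absorbed one factor of $\lambda_b \pi r^2$ to rewrite $\lambda_b \pi r \cdot (\lambda_b \pi r^2)^{k-1}$ as $(\lambda_b \pi r^2)^k / r$, matching the claimed density.

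There is no real obstacle here; the only mildly subtle step is recognizing that the derivative of the partial-sum tail collapses by a telescoping cancellation rather than having to sum term by term. Because the statement is cited from the prior literature, an acceptable alternative would simply be to invoke the reference, but the short derivation above is self-contained and confirms the formula.
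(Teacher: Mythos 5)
Your derivation is correct: the equivalence $\{R_k \leq r\} \Leftrightarrow \{N(B(0,r)) \geq k\}$, the Poisson count with mean $\lambda_b \pi r^2$, and the telescoping cancellation when differentiating the partial sum are all handled properly, and the final algebraic rewriting into the form $\frac{2(\lambda_b\pi r^2)^k}{r\,\Gamma(k)}e^{-\lambda_b\pi r^2}$ checks out. The paper itself gives no proof of this lemma and simply cites the reference \cite{haenggi2005distances}; your argument is the standard one found there, so it is a valid self-contained substitute rather than a genuinely different route.
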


With Lemma 1 and Lemma 2, we can get a tractable expression of the coverage probability. In the following, we provide more compact forms to bound the coverage probability.

\begin{theorem} \label{theorem:MF bound}
The coverage probability of the typical user served by the $k$-th nearest SBS using MF beamforming is bounded as:
\begin{align}
P_{\text{cov,mf}}^{k,\text{l}}(K) \leq P_{\text{cov,mf}}^{k}(K) \leq P_{\text{cov,mf}}^{k,\text{u}}(K),
\end{align}
with
\begin{align}
&P_{\text{cov,mf}}^{k,\text{u}}(K)= \sum_{l=1}^{L} \beta_1\left(\eta,\gamma,\alpha,l,k\right)\frac{\binom{L}{l}(-1)^{l+1}}{\left(1+\beta_2\left(\eta,\gamma,\alpha,l\right)\right)^k},\label{eqn:app1} \\
&P_{\text{cov,mf}}^{k,\text{l}}(K)=\sum_{l=1}^{L} \beta_1(1,\gamma,\alpha,l,k)\frac{\binom{L}{l}(-1)^{l+1}}{(1+\beta_2(1,\gamma,\alpha,l))^k} \label{eqn:app11},
\end{align}
where
\begin{align}
&\beta_1(x,\gamma,\alpha,l,k)=\left[1-\frac{2 (x \gamma l)^{\frac{2}{\alpha}}}{ \alpha}B\left(\frac{2}{\alpha},1-\frac{2}{\alpha},\frac{1}{1+x \gamma l}\right)\right]^{k-1},\label{eqn:beta1} \\
&\beta_2(x,\gamma,\alpha,l)=2\frac{\left(x \gamma l\right)^{\frac{2}{\alpha}}}{\alpha} B^{'}\left(\frac{2}{\alpha},1-\frac{2}{\alpha},\frac{1}{1+x \gamma l}\right)\label{eqn:beta2},
\end{align}
and $\eta=(L!)^{-\frac{1}{L}}$, $B(x,y,z)\triangleq \int_0^z u^{x-1}(1-u)^{y-1}du$ is the incomplete Beta function and $B^{'}(x,y,z)\triangleq \int_z^1 u^{x-1}(1-u)^{y-1}du$ is the complementary incomplete Beta function.
\end{theorem}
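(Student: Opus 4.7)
The plan is to bypass the derivatives of $\mathcal{L}_{I_r}$ appearing in Lemma~\ref{lemma:1} by bounding the complementary CDF of the desired-signal gain $g_{k,\text{mf}}\sim\Gamma(L,1)$ via Alzer's inequality and then performing the two integrals that appear in $\mathcal{L}_{I_r}$ by explicit substitutions.

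First I would rewrite $P_{\text{cov,mf}}^{k}(K)=\mathbb{E}_{r_k,I_r}\bigl[\bar F_L(\gamma r_k^\alpha I_r)\bigr]$ where $\bar F_L(x)=\Pr(g_{k,\text{mf}}>x)$ and invoke Alzer's inequality,
\begin{equation*}
1-(1-e^{-x})^L\;\le\;\bar F_L(x)\;\le\;1-(1-e^{-\eta x})^L,\qquad \eta=(L!)^{-1/L}.
\end{equation*}
Expanding the right-hand side by the binomial theorem turns the bound into $\sum_{l=1}^{L}\binom{L}{l}(-1)^{l+1}e^{-l\eta x}$, so that taking expectation over $I_r$ first (conditional on $r_k$) converts each term into a Laplace transform:
\begin{equation*}
P_{\text{cov,mf}}^{k}(K)\;\le\;\mathbb{E}_{r_k}\!\left[\sum_{l=1}^{L}\binom{L}{l}(-1)^{l+1}\,\mathcal{L}_{I_r}\!\bigl(l\eta\gamma r_k^\alpha\bigr)\right].
\end{equation*}
The lower bound follows from the same expansion with $\eta$ replaced by $1$.

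Next I would plug $s=l\eta\gamma r_k^\alpha$ into the expression of $\mathcal{L}_{I_r}(s)$ from Lemma~\ref{lemma:1}. In the inner integral on $[0,r_k]$ the substitution $v=r/r_k$ makes the integrand depend on $r_k$ only through $l\eta\gamma$; applying $u=v^\alpha/(v^\alpha+l\eta\gamma)$ then converts it to an incomplete Beta integral, producing exactly the factor $\beta_1(\eta,\gamma,\alpha,l,k)$ raised to the power $k-1$ (using also $\int_0^{r_k}2r/r_k^2\,dr=1$ to split the integrand before the substitution). For the outer integral on $[r_k,\infty)$, the substitution $u=1/(1+sr^{-\alpha})$ turns it into the complementary incomplete Beta function, yielding the term $\pi\lambda_b r_k^2\,\beta_2(\eta,\gamma,\alpha,l)$ inside the exponential. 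Thus $\mathcal{L}_{I_r}(l\eta\gamma r_k^\alpha)=\beta_1(\eta,\gamma,\alpha,l,k)\exp\!\bigl(-\pi\lambda_b r_k^2\beta_2(\eta,\gamma,\alpha,l)\bigr)$.

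Finally, I would average over $r_k$ using the generalized Gamma pdf of Lemma~2. The integral $\int_0^\infty r^{2k-1}\exp\!\bigl(-\lambda_b\pi r^2(1+\beta_2)\bigr)\,dr$ is evaluated in closed form by the change of variable $y=\lambda_b\pi r^2(1+\beta_2)$, which produces $\Gamma(k)/\bigl(2(\lambda_b\pi)^k(1+\beta_2)^k\bigr)$ and cancels against the prefactor $2(\lambda_b\pi)^k/\Gamma(k)$, leaving the compact factor $1/(1+\beta_2)^k$. Summing the resulting terms over $l=1,\dots,L$ gives \eqref{eqn:app1}; the identical manipulation with $\eta$ replaced by $1$ gives \eqref{eqn:app11}. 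The main obstacle is the bookkeeping in the two substitutions that identify the inner and outer integrals with $B(\cdot)$ and $B'(\cdot)$ respectively; once those are handled, everything else is routine.
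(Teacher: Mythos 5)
Your proposal is correct and follows essentially the same route as the paper's Appendix B: Alzer's inequality on the $\Gamma(L,1)$ CCDF, binomial expansion into a signed sum of Laplace transforms $\mathcal{L}_{I_r}(l\eta\gamma r_k^\alpha)$, identification of the inner and outer integrals with $B(\cdot)$ and $B'(\cdot)$ to get $\beta_1$ and $e^{-\pi\lambda_b r_k^2\beta_2}$, and a final Gamma-integral average over $r_k$ yielding $(1+\beta_2)^{-k}$. The only differences are cosmetic choices of substitution variables.
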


\begin{proof}
See Appendix B.
\end{proof}

In the special cases with $L=1$ (single antenna), the upper and lower bounds coincide and hence give the exact expression. In addition, if $\alpha=4$, this expression can be written as a closed form.

\begin{corollary}
The coverage probability of the typical user served by the $k$-th nearest SBS in the single-antenna network with $\alpha=4$ is given by:
\begin{align}
&P_{\text{cov,mf}}^k(K)=\frac{\left(1-\sqrt{\gamma}\text{arcsin}\frac{1}{\sqrt{1+\gamma}}\right)^{k-1}}{\left(1+\sqrt{\gamma}\text{arccos}\frac{1}{\sqrt{1+\gamma}}\right)^k}.\label{eqn:special1}
\end{align}
\end{corollary}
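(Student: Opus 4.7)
The plan is to specialize Theorem~\ref{theorem:MF bound} to the single-antenna case ($L=1$) and the specific path-loss exponent $\alpha=4$, and then evaluate the resulting incomplete Beta functions in closed form using a standard trigonometric substitution.

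First I would observe that when $L=1$, the constant $\eta=(L!)^{-1/L}=1$, so the upper bound $P_{\text{cov,mf}}^{k,\text{u}}(K)$ and lower bound $P_{\text{cov,mf}}^{k,\text{l}}(K)$ in Theorem~\ref{theorem:MF bound} coincide. Moreover, the sum over $l=1,\ldots,L$ collapses to a single term at $l=1$ with binomial coefficient $\binom{1}{1}=1$, giving the exact identity
\begin{equation*}
P_{\text{cov,mf}}^{k}(K) \;=\; \frac{\beta_1(1,\gamma,\alpha,1,k)}{\bigl(1+\beta_2(1,\gamma,\alpha,1)\bigr)^{k}}.
\end{equation*}
The task therefore reduces to evaluating $\beta_1$ and $\beta_2$ in (\ref{eqn:beta1})--(\ref{eqn:beta2}) at $x=1$, $l=1$, $\alpha=4$.

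The second step is the key calculation: evaluate the incomplete Beta function $B(\tfrac{1}{2},\tfrac{1}{2},z)$ at $z=\tfrac{1}{1+\gamma}$. Using the substitution $u=\sin^2\theta$, one has the elementary antiderivative
\begin{equation*}
\int u^{-1/2}(1-u)^{-1/2}\,du \;=\; 2\arcsin\sqrt{u} + C,
\end{equation*}
so $B(\tfrac{1}{2},\tfrac{1}{2},z)=2\arcsin\sqrt{z}$ and, since the complete Beta $B(\tfrac{1}{2},\tfrac{1}{2})=\pi$, the complementary piece satisfies $B'(\tfrac{1}{2},\tfrac{1}{2},z)=\pi-2\arcsin\sqrt{z}=2\arccos\sqrt{z}$. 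Plugging $z=\tfrac{1}{1+\gamma}$ yields $\sqrt{z}=1/\sqrt{1+\gamma}$, and then substituting into (\ref{eqn:beta1})--(\ref{eqn:beta2}) with $\alpha=4$ gives
\begin{equation*}
\beta_1(1,\gamma,4,1,k) = \Bigl(1-\sqrt{\gamma}\arcsin\tfrac{1}{\sqrt{1+\gamma}}\Bigr)^{k-1}, \quad \beta_2(1,\gamma,4,1) = \sqrt{\gamma}\arccos\tfrac{1}{\sqrt{1+\gamma}}.
\end{equation*}
Substituting these two expressions into the collapsed form from the first step produces exactly (\ref{eqn:special1}).

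Honestly, there is no real obstacle here: once one recognizes that $L=1$ forces the bounds to agree and that $(\tfrac{1}{2},\tfrac{1}{2})$ is the arcsine case of the incomplete Beta function, the rest is algebraic simplification of $\tfrac{2}{\alpha}=\tfrac{1}{2}$. The only place to be careful is tracking the factor $\tfrac{2(x\gamma l)^{2/\alpha}}{\alpha}=\tfrac{\sqrt{\gamma}}{2}$ and pairing it with the $2$ coming out of $B(\tfrac{1}{2},\tfrac{1}{2},z)=2\arcsin\sqrt{z}$ so that the constants in front of the $\arcsin$ and $\arccos$ terms correctly reduce to $\sqrt{\gamma}$ rather than $2\sqrt{\gamma}$ or $\sqrt{\gamma}/2$.
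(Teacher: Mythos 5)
Your proposal is correct and follows exactly the route the paper intends: the paper itself only remarks that for $L=1$ the Alzer bounds coincide (since $\eta=1$) so Theorem~\ref{theorem:MF bound} becomes exact, and that $\alpha=4$ turns the incomplete Beta functions $B(\tfrac12,\tfrac12,\tfrac{1}{1+\gamma})$ and $B'(\tfrac12,\tfrac12,\tfrac{1}{1+\gamma})$ into $2\arcsin\tfrac{1}{\sqrt{1+\gamma}}$ and $2\arccos\tfrac{1}{\sqrt{1+\gamma}}$, which is precisely your calculation. The constants also check out: the prefactor $\tfrac{2\sqrt{\gamma}}{4}$ cancels the factor $2$ from the arcsine form, yielding $\sqrt{\gamma}$ as in (\ref{eqn:special1}).
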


In the special case, when the user is served by its nearest SBS, e.g., $k=1$, (\ref{eqn:special1}) reduces to \cite[Theorem $2$]{andrews2011tractable}.

\subsection{Coverage Probability with ZF Beamforming}
\begin{lemma}
The coverage probability of the typical user served by the $k$-th nearest SBS with ZF beamforming is:
\begin{align}
P_{\text{cov,zf}}^{k}(K)=\mathbb{E}_{r_k,r_K}\left[\sum_{i=0}^{L-K}\frac{(-\gamma {r_k}^\alpha)^i}{i!}\mathcal{L}_{I_r}^{(i)}(\gamma {r_k}^\alpha)\right], \label{eqn:cov-zf}
\end{align}
where $I_r=\sum_{j\in\Phi_b\backslash \mathcal{C}}g_{j,\text{zf}}\cdot r_j^{-\alpha}$ and its Laplace transform is given by:
\begin{align}
\mathcal{L}_{I_r}(s)=\exp\left(-2\pi\lambda_b\int_{r_K}^{\infty}\frac{sr^{-\alpha}}{1+sr^{-\alpha}}rdr\right).
\end{align}
\end{lemma}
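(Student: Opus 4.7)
The plan is to mirror the derivation of Lemma 1, with two key modifications tailored to ZF: the desired signal gain has distribution $\Gamma(L-K+1,1)$ rather than $\Gamma(L,1)$, and the interference field excludes the entire cluster $\mathcal{C}$ of $K$ nearest SBSs rather than only the serving one. I would begin by writing $P_{\text{cov,zf}}^k(K) = \mathbb{P}[g_{k,\text{zf}} \geq \gamma r_k^\alpha I_r]$ and, conditioning on $r_k$ and $I_r$, applying the CCDF of a Gamma random variable with shape $L-K+1$, so that
\begin{equation}
\mathbb{P}[g_{k,\text{zf}} \geq \gamma r_k^\alpha I_r \mid r_k, I_r] = \sum_{i=0}^{L-K} \frac{(\gamma r_k^\alpha I_r)^i}{i!} e^{-\gamma r_k^\alpha I_r}.
\end{equation}

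Next I would take the expectation over $I_r$ using the identity $\mathbb{E}[I_r^i e^{-sI_r}] = (-1)^i \mathcal{L}_{I_r}^{(i)}(s)$ evaluated at $s = \gamma r_k^\alpha$, which produces the summation appearing in the statement. The outer expectation is then over $r_k$ and $r_K$, the latter entering because the support of the out-of-cluster interferers depends on $r_K$.

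To obtain $\mathcal{L}_{I_r}(s)$, I would invoke the fact that, conditioned on the $K$-th nearest SBS being at distance $r_K$, the points of $\Phi_b \setminus \mathcal{C}$ form a homogeneous PPP of intensity $\lambda_b$ restricted to $\{r > r_K\}$. Then, using the probability generating functional (PGFL) of a PPP together with $g_{j,\text{zf}} \sim \exp(1)$ (so that $\mathbb{E}_g[e^{-s g r^{-\alpha}}] = 1/(1+s r^{-\alpha})$), I would obtain
\begin{equation}
\mathcal{L}_{I_r}(s) = \exp\!\left(-2\pi\lambda_b \int_{r_K}^{\infty} \frac{s r^{-\alpha}}{1+s r^{-\alpha}}\, r\, dr\right),
\end{equation}
which matches the claimed expression. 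Combining with the outer expectation over $r_k$ and $r_K$ completes the argument.

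The main obstacle, compared to the MF case, is justifying the conditioning structure: namely, that given $r_K$, the interferers outside $\mathcal{C}$ are independent of the in-cluster channels $\mathbf{H}$ used to design $\mathbf{w}_{j,\text{zf}}$ for $j \notin \mathcal{C}$, so that the exponential distribution of $g_{j,\text{zf}}$ still holds and is independent across $j$. This follows from the standard fact that ZF beamforming directed at out-of-cluster users leaves the effective gain $|\mathbf{h}_{j0}\mathbf{w}_{j,\text{zf}}|^2$ exponentially distributed and independent of the serving-cluster geometry (as cited via \cite{lee2015spectral,jindal2011multi,li2015user} in the system model), and from the void-probability/Slivnyak-type argument that the remaining PPP outside the $K$ nearest points is still Poisson on $\{r > r_K\}$. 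Once this independence is recorded, the rest of the derivation is a direct adaptation of the MF computation with the appropriate integration limit $r_K$ and appropriate truncation index $L-K$.
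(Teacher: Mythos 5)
Your proposal is correct and follows exactly the route the paper intends: the paper omits the proof of this lemma, stating only that it is ``similar to Appendix A,'' and your adaptation (truncating the Gamma CCDF sum at $L-K$ for the $\Gamma(L-K+1,1)$ gain, dropping the in-cluster interference term so that only the PGFL factor with lower limit $r_K$ survives, and taking the outer expectation over both $r_k$ and $r_K$) is precisely that adaptation. Your added remark on the independence of the out-of-cluster effective gains from the cluster geometry is a useful justification that the paper handles only by citation.
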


The proof of lemma $3$ is similar to Appendix A, so we omit it here. Notice that the expectation in (\ref{eqn:cov-zf}) is not only over $r_k$, but also over $r_K$. Thus, we need to know the joint pdf of $r_k$ and $r_K$. The pdf of $r_K$ is given in Lemma $2$, we next focus on the pdf of $r_k$ conditioned on $r_K$.

\begin{lemma}[\cite{srinivasa2010distance}]
Consider $K$ points randomly located in a $2$-dimensional circle of radius $r_K$ centered at the origin according to the uniform distribution, then the distance $R_k$ from the origin to the $k$-th nearest point follows the distribution given by:
\begin{align}
f_{R_k|R_K}(r_k|r_K)=\frac{2r_k}{r_K^2 B(K-k,k)}\left(\frac{r_k^2}{r_K^2}\right)^{k-1}\left(1-\frac{r_k^2}{r_K^2}\right)^{K-k-1},
\end{align}
where $B(x,y)$ denotes the Beta function given by $B(x,y)=\int_0^1t^{x-1}(1-t)^{y-1}dt$.
\end{lemma}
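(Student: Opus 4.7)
The plan is to reduce Lemma~4 to a standard order-statistics calculation for i.i.d.\ uniform variables on a disk. The key fact I would invoke is the following well-known property of a 2-D HPPP: conditioned on the distance $r_K$ to the $K$-th nearest point from the origin, the remaining $K-1$ points of $\Phi_b$ that lie strictly inside the disk of radius $r_K$ centered at the origin are i.i.d.\ uniformly distributed on that disk. This follows from the standard fact that, given the number of PPP points in a bounded region, those points are i.i.d.\ uniform on the region. It is what makes the uniform setup in the lemma statement natural: the $K$-th nearest SBS is pinned to the boundary at distance $r_K$, leaving $K-1$ free uniform points in the interior.

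Given this reduction, the remainder is bookkeeping with a textbook order-statistics formula. For a single point uniformly distributed on the disk of radius $r_K$, its distance $R$ to the origin has CDF $\Pr(R\le r)=r^{2}/r_K^{2}$ by the area-ratio argument, so $U:=R^{2}/r_K^{2}$ is $\mathrm{Uniform}[0,1]$. Letting $U_1,\dots,U_{K-1}$ be the i.i.d.\ $\mathrm{Uniform}[0,1]$ variables associated with the $K-1$ interior points, the squared distance of the $k$-th nearest point, divided by $r_K^{2}$, coincides with the $k$-th order statistic $U_{(k)}$.

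I would then invoke the standard Beta density of the $k$-th order statistic of $K-1$ i.i.d.\ uniforms,
\begin{equation*}
f_{U_{(k)}}(u)=\frac{1}{B(k,K-k)}\,u^{k-1}(1-u)^{K-k-1},
\end{equation*}
and convert back to $r_k$ via the change of variable $u=r_k^{2}/r_K^{2}$, so that $\mathrm{d}u=(2r_k/r_K^{2})\,\mathrm{d}r_k$. The symmetry $B(k,K-k)=B(K-k,k)$ then reproduces exactly the conditional density displayed in the lemma.

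The only real obstacle is conceptual rather than computational: one must notice the off-by-one inherent in the conditioning, namely that fixing the $K$-th nearest point on the boundary leaves $K-1$ (not $K$) free uniform points inside the disk. This is why the Beta parameters come out as $(k,K-k)$ rather than $(k,K-k+1)$. Once that subtlety is recognised, the rest is a routine Jacobian transformation of a Beta random variable.
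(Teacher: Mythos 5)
Your proof is correct. The paper itself gives no proof of this lemma---it is imported directly from \cite{srinivasa2010distance}---and your derivation (conditioning on $R_K=r_K$ so that the $K-1$ interior points of the PPP are i.i.d.\ uniform on the disk, mapping to $U=R^2/r_K^2\sim\mathrm{Uniform}[0,1]$, applying the Beta$(k,K-k)$ density of the $k$-th order statistic of $K-1$ uniforms, and changing variables back) is the standard argument and reproduces the stated density exactly. You also correctly flag the one genuine subtlety: the formula corresponds to $K-1$ free uniform points with the $K$-th pinned at radius $r_K$, not $K$ free points as a literal reading of the lemma statement might suggest, and this is consistent with how the paper uses the lemma to build the joint pdf $f_{R_k,R_K}$.
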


Thus, the joint pdf of $r_k$ and $r_K$ can be obtained as:
\begin{align}
f_{R_k,R_K}(r_k,r_K)&=f_{R_k|R_K}(r_k|r_K)f_{R_K}(r_K) \nonumber \\
&=\frac{4(\lambda_b\pi)^K}{\Gamma(K-k)\Gamma(k)}r_kr_K(r_k^2)^{k-1} \nonumber \\
&~~~\times(r_K^2-r_k^2)^{K-k-1}\exp\left(-\lambda_b \pi r_K^2\right). \label{eqn:joint pdf}
\end{align}

With Lemma 3 and joint pdf of $r_k$ and $r_K$, we can obtain a tractable expression for the coverage probability. More compact forms of the approximate coverage probability bounds are obtained in the following theorem.
\begin{theorem}\label{theorem:BF bound}
The coverage probability of the typical user served by the $k$-th nearest SBS with ZF beamforming can be approximately bounded as:
\begin{align}
P_{\text{cov,zf}}^{k,\text{l}}(K) \lesssim P_{\text{cov,zf}}^{k}(K) \lesssim P_{\text{cov,zf}}^{k,\text{u}}(K),
\end{align}
with
\begin{align}
&P_{\text{cov,zf}}^{k,\text{u}}(K)=\sum_{l=1}^{L-K+1} \frac{\binom{L-K+1}{l}(-1)^{l+1}}{{\left[1+\left(\kappa \gamma l\right)^{\frac{2}{\alpha}}\sqrt{\frac{k}{K}} \mathcal{A}\left(\frac{\sqrt K \left(\kappa \gamma l\right)^{-\frac{2}{\alpha}}}{\sqrt k }\right)\right]^k}},\label{eqn:app2}  \\
&P_{\text{cov,zf}}^{k,\text{l}}(K)=\sum_{l=1}^{L-K+1} \frac{\binom{L-K+1}{l}(-1)^{l+1}}{{\left[1+\left( \gamma l\right)^{\frac{2}{\alpha}}\sqrt{\frac{k}{K}} \mathcal{A}\left(\frac{\sqrt K \left(\gamma l\right)^{-\frac{2}{\alpha}}}{\sqrt k }\right)\right]^k}}, \label{eqn:app22} \end{align}
where $\mathcal{A}(x)=\int_x^\infty \frac{1}{1+u^{\frac{\alpha}{2}}}du$ and $\kappa=(L-K+1!)^{-\frac{1}{L-K+1}}$.
\end{theorem}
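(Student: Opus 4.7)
The plan is to mirror the structure of the proof of Theorem~\ref{theorem:MF bound} but with additional work to handle the joint distribution of $r_k$ and $r_K$ that enters through the cluster-boundary interference. Starting from $P_{\text{cov,zf}}^{k}(K)=\mathbb{P}[g_{k,\text{zf}}\geq\gamma r_k^\alpha I_r]$ and conditioning on $(r_k,r_K,I_r)$, I would invoke Alzer's inequality for the CCDF of $g_{k,\text{zf}}\sim\Gamma(L-K+1,1)$, namely $1-(1-e^{-x})^{L-K+1}\leq\mathbb{P}(g_{k,\text{zf}}\geq x)\leq 1-(1-e^{-\kappa x})^{L-K+1}$ with $\kappa=((L-K+1)!)^{-1/(L-K+1)}$. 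Binomial expansion turns each side into $\sum_{l=1}^{L-K+1}\binom{L-K+1}{l}(-1)^{l+1}e^{-clx}$ with $c=\kappa$ for the upper bound and $c=1$ for the lower, and taking the conditional expectation over $I_r$ then converts each exponential into $\mathcal{L}_{I_r}(cl\gamma r_k^\alpha)$.

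Next I would rewrite the Laplace transform of Lemma~3 in a form containing $\mathcal{A}(\cdot)$. The substitution $v=r^2/s^{2/\alpha}$ inside $\int_{r_K}^{\infty}sr^{-\alpha}/(1+sr^{-\alpha})\,r\,dr$ yields $\pi\lambda_b s^{2/\alpha}\mathcal{A}(s^{-2/\alpha}r_K^2)$, so that $\mathcal{L}_{I_r}(s)=\exp(-\pi\lambda_b s^{2/\alpha}\mathcal{A}(s^{-2/\alpha}r_K^2))$. Plugging $s=\kappa l\gamma r_k^\alpha$ gives $\exp(-\pi\lambda_b(\kappa l\gamma)^{2/\alpha}r_k^2\mathcal{A}((\kappa l\gamma)^{-2/\alpha}r_K^2/r_k^2))$, whose $\mathcal{A}$-argument depends on the random ratio $r_K^2/r_k^2$. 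This is the structural difference from the MF case of Theorem~\ref{theorem:MF bound}, where the analogous argument was a deterministic constant and a single integration against $f_{R_k}$ produced the $(1+\beta_2)^k$ factor directly.

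The main obstacle is therefore the joint expectation against (\ref{eqn:joint pdf}). It is convenient to pass to independent variables $Y=\lambda_b\pi R_k^2\sim\Gamma(k,1)$ and $W=\lambda_b\pi(R_K^2-R_k^2)\sim\Gamma(K-k,1)$, which reduces the required quantity to $\mathbb{E}_{Y,W}[\exp(-\mu Y\mathcal{A}(\mu^{-1}(1+W/Y)))]$ with $\mu=(\kappa l\gamma)^{2/\alpha}$. Because both the prefactor and the argument of $\mathcal{A}$ depend on $Y$, this has no elementary closed form. The decoupling step is to pin the random ratio $R_K/R_k$ to its typical value $\sqrt{K/k}$, consistent with $\mathbb{E}[R_k^2]=k/(\lambda_b\pi)$, and to rescale the linear coefficient by the matching factor $\sqrt{k/K}$; then $\mathcal{A}(\mu^{-1}\sqrt{K/k})$ is constant in $Y$, and the outer expectation collapses to the Laplace transform of $\Gamma(k,1)$ evaluated at $\mu\sqrt{k/K}\mathcal{A}(\mu^{-1}\sqrt{K/k})$, which is exactly the denominator factor in (\ref{eqn:app2}). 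The lower bound is obtained identically with $\kappa$ replaced by $1$. Because the ratio is pinned rather than averaged, the result is an approximate bound, which is why the theorem uses $\lesssim$; the approximation is exact in the boundary case $k=K$ (where $r_K=r_k$ almost surely) and should be verified empirically to be tight in general.
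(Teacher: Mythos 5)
Your proposal is correct and follows essentially the same route as the paper's Appendix C: Alzer's inequality for $g_{k,\text{zf}}\sim\Gamma(L-K+1,1)$ with the binomial expansion, a change of variables expressing $\mathcal{L}_{I_r}$ through $\mathcal{A}$, the same heuristic of freezing the $\mathcal{A}$-argument at $\sqrt{K/k}\,(\kappa\gamma l)^{-2/\alpha}$ together with the $\sqrt{k/K}$ rescaling of the coefficient (a step that is equally ad hoc in the paper, which is precisely why both statements carry $\lesssim$), and a final exact Gamma-type integration producing the $(1+\cdot)^{-k}$ denominator. The only organizational difference is that you integrate once against $Y=\lambda_b\pi R_k^2\sim\Gamma(k,1)$ after pinning $W/Y$, whereas the paper first integrates over $r_K$ (obtaining a $(1+\beta_3)^{-K}$ factor) and then over $\delta_k=r_k/r_K$ via a Beta-integral identity; the two computations evaluate the same expectation.
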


\begin{proof}
See Appendix C.
\end{proof}

The approximate upper and lower bounds coincide when $L=K$. Furthermore, when $\alpha=4$, the approximate coverage probability can be written in a closed form.

\begin{corollary}
The approximate coverage probability of the typical user served by the $k$-th nearest SBS using ZF beamforming with $L=K$ and $\alpha=4$ is given by:
\begin{align}
P_{\text{cov,zf}}^k(K)\simeq \frac{1}{\left[1+\sqrt{\frac{k\gamma}{K}}\text{arccot}\left(\frac{K}{k\gamma}\right)\right]^k}.\label{eqn:special2}
\end{align}
\end{corollary}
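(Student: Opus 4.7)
The plan is to derive this closed form as a direct specialization of Theorem~\ref{theorem:BF bound}, exploiting the two structural collapses induced by $L=K$ and $\alpha=4$. Since the corollary is stated as an approximate equality (matching the $\simeq$ that comes from Theorem~\ref{theorem:BF bound}), what I really need to show is that the upper and lower approximate bounds collapse to a single expression, and then simplify that expression.

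First I would set $L=K$ in both $P_{\text{cov,zf}}^{k,\text{u}}(K)$ and $P_{\text{cov,zf}}^{k,\text{l}}(K)$. The summation upper limit $L-K+1$ becomes $1$, so only the $l=1$ term survives, with coefficient $\binom{1}{1}(-1)^{1+1}=1$. At the same time, $\kappa=((L-K+1)!)^{-1/(L-K+1)}=(1!)^{-1}=1$, which removes the only structural difference between the upper and lower bounds. Consequently $P_{\text{cov,zf}}^{k,\text{u}}(K)$ and $P_{\text{cov,zf}}^{k,\text{l}}(K)$ agree and both reduce to $\bigl[1+\gamma^{2/\alpha}\sqrt{k/K}\,\mathcal{A}\bigl(\sqrt{K}\gamma^{-2/\alpha}/\sqrt{k}\bigr)\bigr]^{-k}$, which explains the $\simeq$ in the corollary without any further work.

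Second, I would specialize $\alpha=4$. The exponents $2/\alpha$ become $1/2$, so $\gamma^{2/\alpha}=\sqrt{\gamma}$, and the integral $\mathcal{A}(x)=\int_x^{\infty}\frac{du}{1+u^{\alpha/2}}$ reduces to $\int_x^{\infty}\frac{du}{1+u^2}=\frac{\pi}{2}-\arctan x=\mathrm{arccot}\,x$, an elementary primitive. Combining $\sqrt{\gamma}\cdot\sqrt{k/K}=\sqrt{k\gamma/K}$ for the coefficient and $\sqrt{K}\gamma^{-1/2}/\sqrt{k}=\sqrt{K/(k\gamma)}$ for the argument of $\mathcal{A}$ gives the claimed closed form. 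The derivation is essentially routine plug-and-chug, so the only mild point of care is checking the algebraic form of the arccot argument (a direct substitution gives $\mathrm{arccot}(\sqrt{K/(k\gamma)})$, equivalent to $\arctan(\sqrt{k\gamma/K})$), which I would reconcile with the display in the corollary; modulo that bookkeeping, no genuine obstacle arises.
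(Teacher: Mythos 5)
Your proposal is correct and is exactly the specialization the paper intends: setting $L=K$ kills all but the $l=1$ term and forces $\kappa=1$, so the approximate upper and lower bounds of Theorem~\ref{theorem:BF bound} coincide, and $\alpha=4$ turns $\mathcal{A}(x)$ into $\mathrm{arccot}(x)$. The point you flag about the arccot argument is not mere bookkeeping, however: direct substitution gives $\mathrm{arccot}\bigl(\sqrt{K/(k\gamma)}\bigr)=\arctan\bigl(\sqrt{k\gamma/K}\bigr)$, which is \emph{not} equal to the displayed $\mathrm{arccot}\bigl(K/(k\gamma)\bigr)$; your form is the one that actually follows from Theorem~\ref{theorem:BF bound} (and the one consistent with the $k=1$ reduction to the known nearest-SBS result), so the corollary as printed appears to have dropped a square root rather than your derivation needing reconciliation.
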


When the user is served by the nearest SBS, e.g., $k=1$, (\ref{eqn:special2}) reduces to \cite[Eqn. (28)]{lee2015spectral}.

As we shall demonstrate numerically in Section \uppercase\expandafter{\romannumeral5}-A, the (approximate) upper bounds (\ref{eqn:app1}) and (\ref{eqn:app2}) are very tight. Therefore, we shall use them as approximate coverage probabilities for caching optimization.
\begin{figure*}[htb]
\begin{minipage}[t]{0.5\linewidth}
\centering
\includegraphics[width=6.75cm]{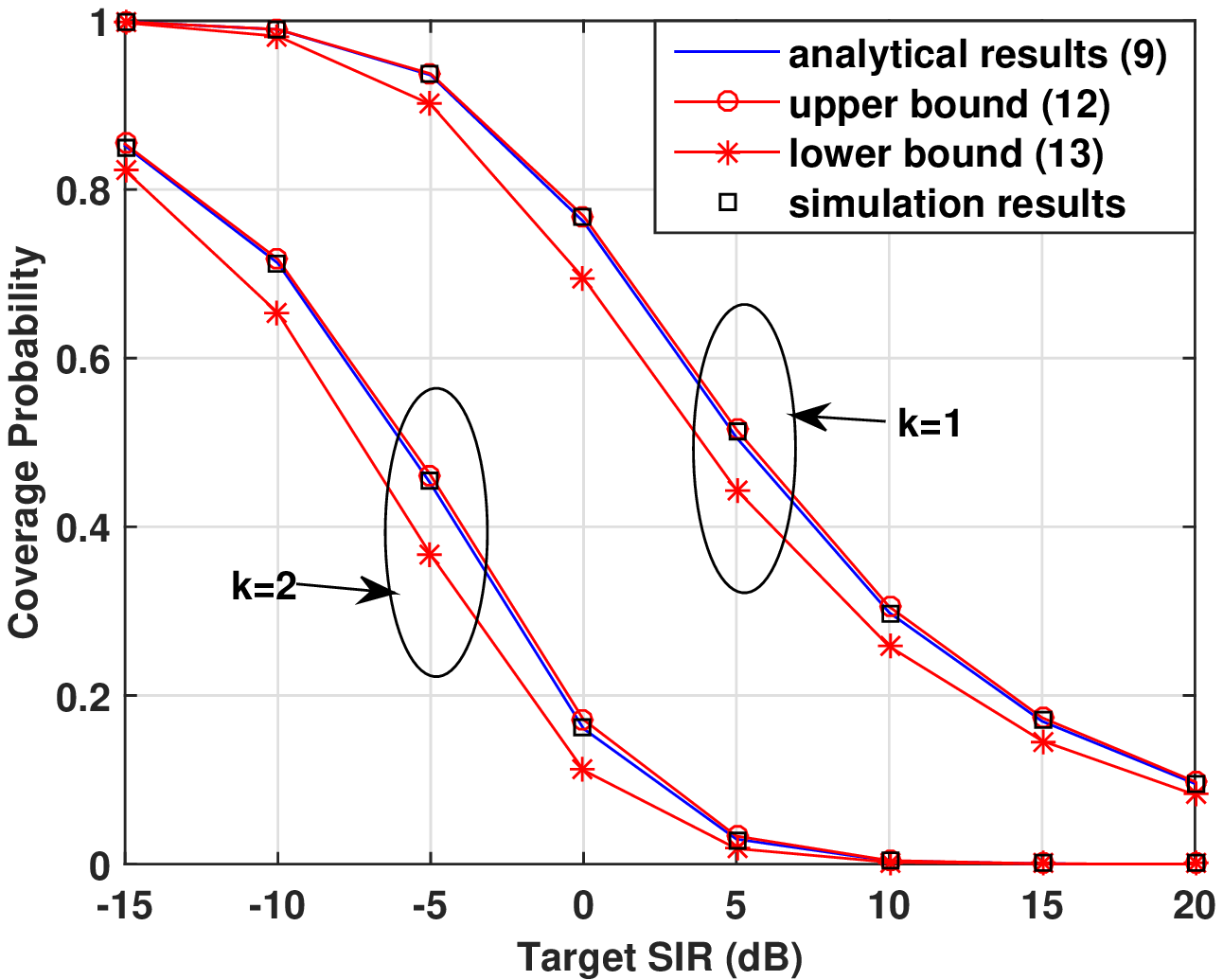}\\
\caption{Coverage probability with $K=2$ and $L=2$ in MF beamforming.}\label{fig:MF}
\end{minipage}
\hfill
\begin{minipage}[t]{0.5\linewidth}
\centering
\includegraphics[width=6.75cm]{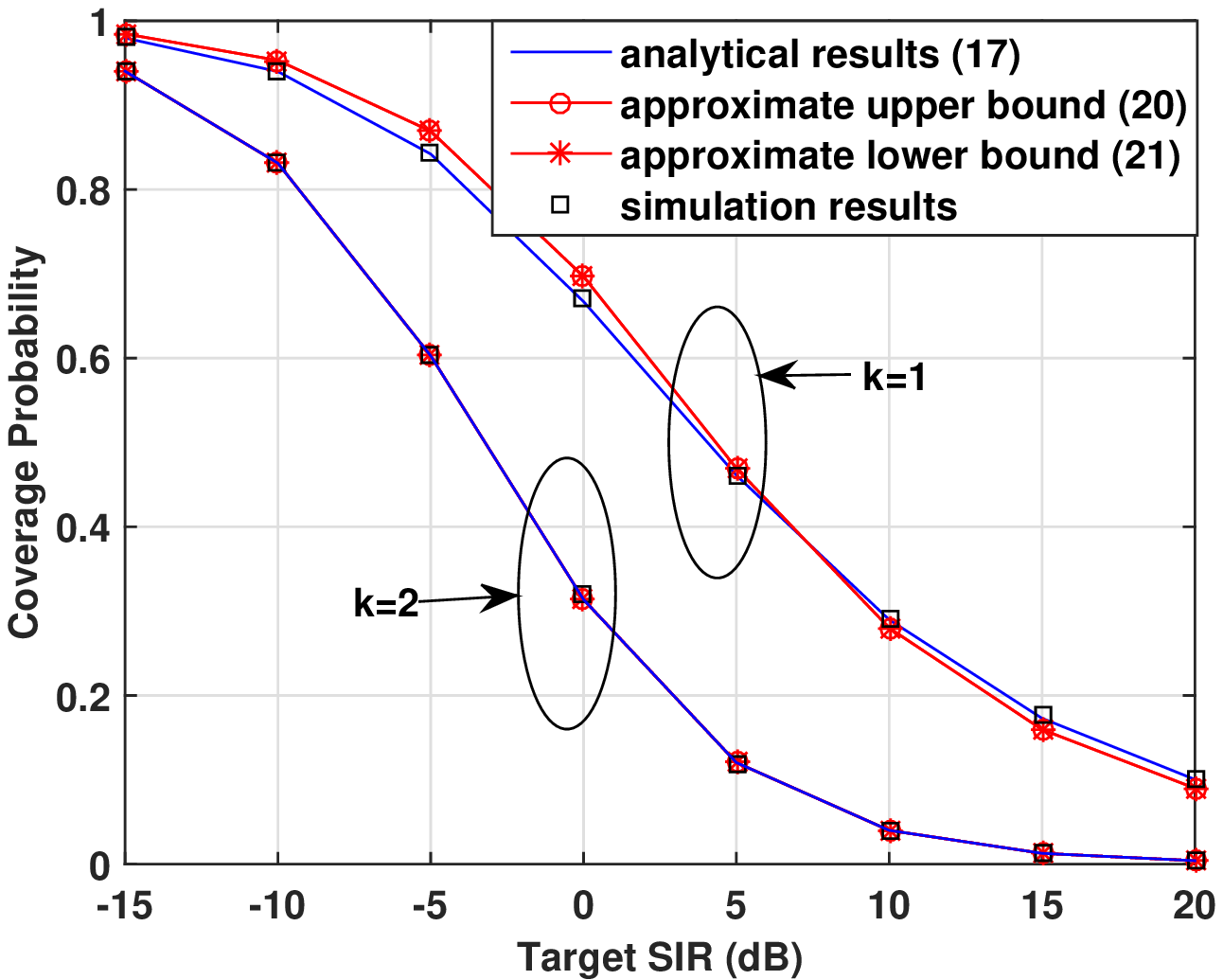}\\
\caption{Coverage probability with $K=2$ and $L=2$ in ZF beamforming.}\label{fig:BF}
\end{minipage}
\end{figure*}

\begin{figure*}[htb]
\begin{minipage}[t]{0.5\linewidth}
\centering
\includegraphics[width=6.75cm]{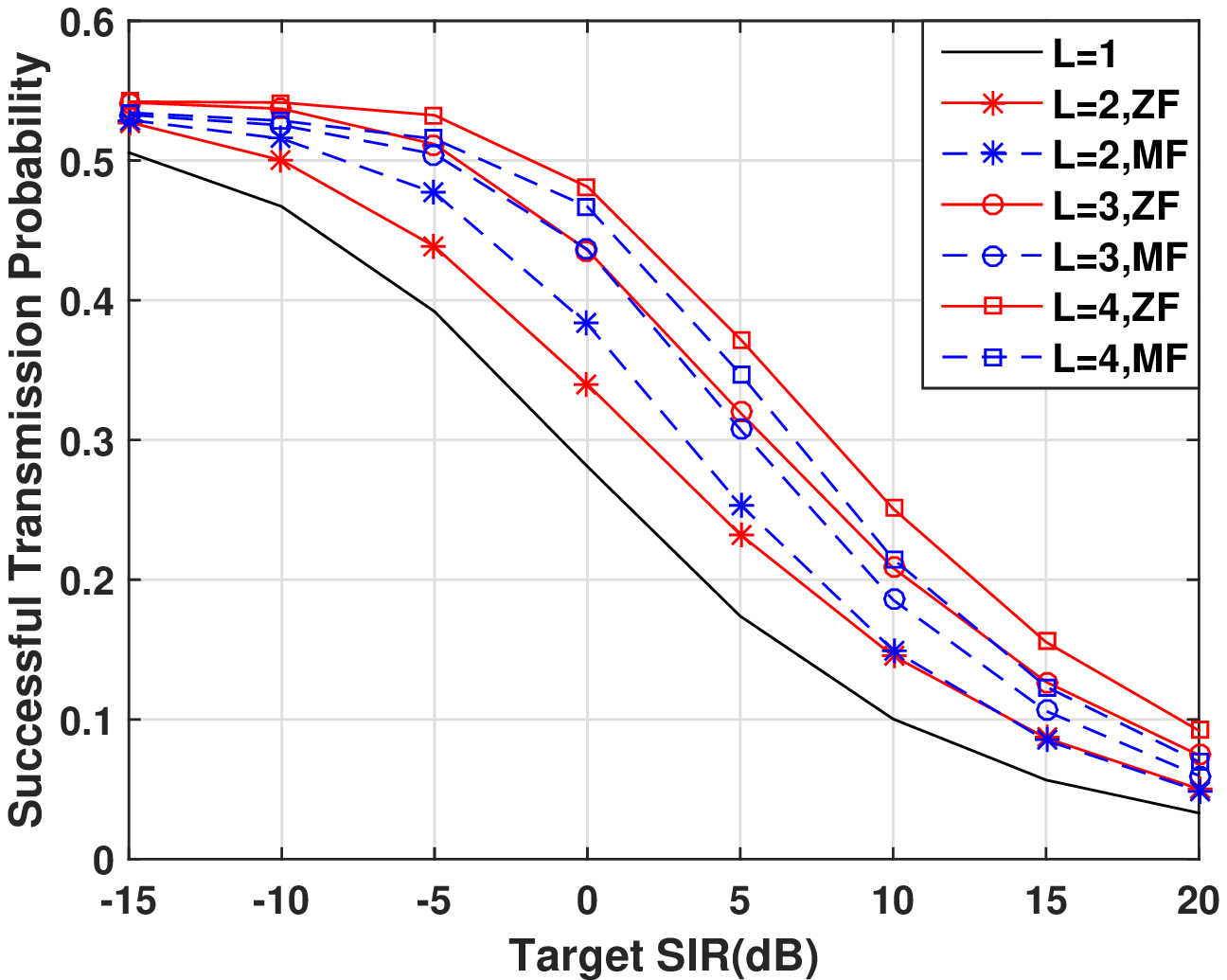}\\
\caption{MF vs ZF with different number of antennas with $K=2$.} \label{fig:Antennas}
\end{minipage}
\hfill
\begin{minipage}[t]{0.5\linewidth}
\centering
\includegraphics[width=6.75cm]{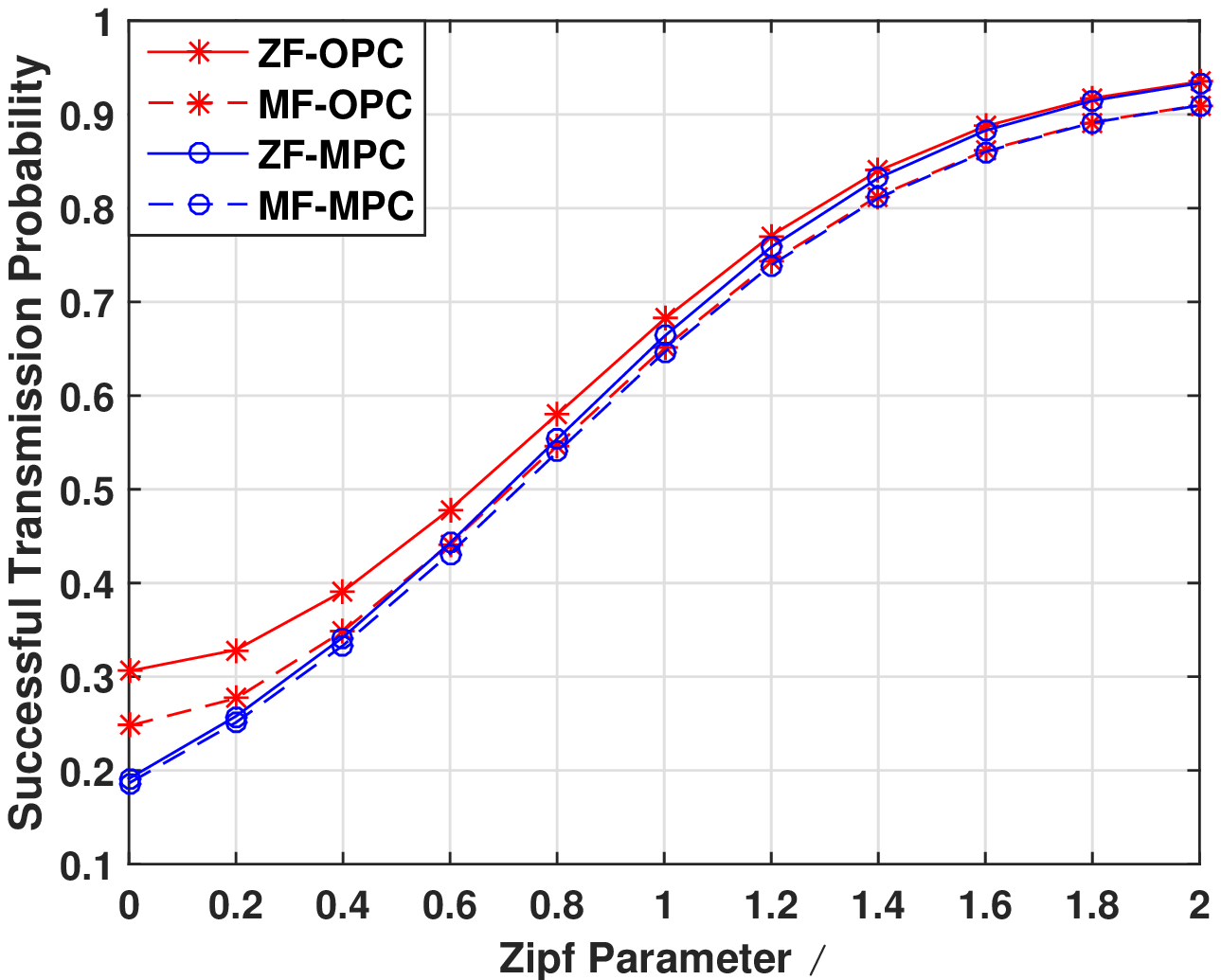}\\
\caption{STP for Different Zipf Skewness Parameter $\delta$ with $K=2$, $L=4$ and $M=20$.}\label{fig:delta}
\end{minipage}
\end{figure*}


\section{Caching Optimization}
In this section, we find the optimal cache probabilities to maximize the STP. The corresponding problem can be formulated as:
\addtocounter{equation}{1}
\begin{align}
\text{\textbf{P1:}} \quad \underset{\{b_n\}}{\text{max}} \quad &\sum_{n=1}^N p_n\sum_{k=1}^K b_n(1-b_n)^{k-1}P^k_{\text{cov}}(K), \tag{\theequation a}  \label{eqn:objective}\\
\text{s.t} \quad &\sum_{n=1}^N b_n \leq M,   \tag{\theequation b}   \label{eqn:contraint1}  \\
 &0\leq b_n \leq 1, ~~~  n=1,2,\ldots,N,     \tag{\theequation c}   \label{eqn:contraint2}
\end{align}
where $P_{\text{cov}}^k(K)$ is given by (\ref{eqn:app1}) for MF  beamforming and by (\ref{eqn:app2}) for ZF beamforming. (\ref{eqn:contraint1}) is the cache size constraint. A practical cache placement realization with given probabilities can be found in \cite{blaszczyszyn2015optimal}. Since caching more files increases the STP, without loss of optimality, constraint (\ref{eqn:contraint1}) can be rewritten as:
\begin{align}
\sum_{n=1}^N b_n=M.   \label{eqn:constraint}
\end{align}

\begin{lemma}
The problem $\textbf{P1}$ is convex for both MF and ZF beamforming.
\end{lemma}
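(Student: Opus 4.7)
The plan is to verify the two ingredients of a convex maximization program: a convex feasible set and a concave objective. The feasible region cut out by (\ref{eqn:constraint}) and (\ref{eqn:contraint2}) is the intersection of a hyperplane with the unit box, hence a convex polytope. All the work is therefore to show concavity of the objective in $\mathbf{b} = (b_1, \ldots, b_N)$.

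Since the objective decomposes as $\sum_{n=1}^{N} p_n\, g(b_n)$ with $p_n \geq 0$ and the same scalar function $g(b) = \sum_{k=1}^{K} P^k_{\text{cov}}(K)\, b(1-b)^{k-1}$, it suffices to show that $g$ is concave on $[0,1]$. The key manipulation is the telescoping identity $b(1-b)^{k-1} = (1-b)^{k-1} - (1-b)^k$, which, writing $a_k := P^k_{\text{cov}}(K)$, recasts
\[
 g(b) \;=\; a_1 \;+\; \sum_{k=1}^{K-1}(a_{k+1} - a_k)(1-b)^k \;-\; a_K(1-b)^K.
\]
Differentiating twice, the linear $k=1$ term drops out and
\[
 g''(b) \;=\; \sum_{k=2}^{K-1} k(k-1)(a_{k+1} - a_k)(1-b)^{k-2} \;-\; K(K-1)\, a_K\, (1-b)^{K-2}.
\]
Because $a_K \geq 0$ and $(1-b)^{k-2} \geq 0$ on $[0,1]$, each summand and the trailing term are non-positive provided $a_{k+1} \leq a_k$ for every $k \geq 2$. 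This yields $g''(b) \leq 0$, so $g$ is concave; summing with the non-negative weights $p_n$ preserves concavity, and \textbf{P1} reduces to a concave maximization over a convex polytope.

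The single non-trivial ingredient is therefore the monotonicity $P^1_{\text{cov}}(K) \geq P^2_{\text{cov}}(K) \geq \cdots \geq P^K_{\text{cov}}(K)$, which I expect to be the main obstacle: the closed-form bounds (\ref{eqn:app1}) and (\ref{eqn:app2}) are alternating sums in $l$, so term-by-term monotonicity in $k$ is not visible. For the ZF case this is easiest to establish at the stochastic-geometry level: conditioned on $r_K$, the interference law $I_r = \sum_{j \in \Phi_b \backslash \mathcal{C}} g_{j,\text{zf}}\, r_j^{-\alpha}$ does not depend on $k$, while the desired signal $g_{k,\text{zf}}\, r_k^{-\alpha}$ is stochastically smaller for larger $k$ because $r_k$ is stochastically larger conditionally on $r_K$ (by the joint density in (\ref{eqn:joint pdf})); a coupling argument then yields $P^{k+1}_{\text{cov,zf}} \leq P^k_{\text{cov,zf}}$. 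The MF case is handled analogously, with a minor extra step to account for the fact that exactly one interferer changes identity (from $\mathbf{d}_{k+1}$ to $\mathbf{d}_k$) as $k$ increases, but this substitution only enlarges the interference sum (the excluded interferer is moved farther out), so the direction of the inequality is preserved. With monotonicity in hand, the concavity calculation above closes the proof for both beamforming schemes.
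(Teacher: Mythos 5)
Your overall architecture matches the paper's: both proofs reduce convexity of \textbf{P1} to the single nontrivial fact that $P_{\text{cov}}^k(K)$ is non-increasing in $k$, and then deduce concavity of $b\mapsto\sum_k P_{\text{cov}}^k(K)\,b(1-b)^{k-1}$. Your execution of the second step is genuinely different and cleaner: the Abel-summation rewrite $g(b)=a_1+\sum_{k=1}^{K-1}(a_{k+1}-a_k)(1-b)^k-a_K(1-b)^K$ makes $g''\le 0$ termwise obvious, whereas the paper differentiates the original form and chains $K-2$ inequalities to absorb the positive contributions one by one. Your route to monotonicity is also different in flavor: a pathwise/stochastic coupling conditioned on the point configuration (for ZF the interference law is $k$-independent given $r_K$ while $r_k^{-\alpha}$ decreases; for MF the swapped interferer moves closer), versus the paper's analytic argument that $G(s)=\sum_{i=0}^{L-K}\frac{(-s)^i}{i!}\mathcal{L}_{I_r}^{(i)}(s)$ has $G'(s)=(-1)^{L-K}s^{L-K}\mathcal{L}_{I_r}^{(L-K+1)}(s)\le0$. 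Both are valid for the \emph{exact} coverage probabilities, and your coupling is arguably more transparent.

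The gap is in which quantity you prove monotone. \textbf{P1} is formulated with $P_{\text{cov}}^k(K)$ given by the closed-form approximations (\ref{eqn:app1}) and (\ref{eqn:app2}), not the exact coverage probabilities. You correctly observe that monotonicity in $k$ is ``not visible'' from these alternating sums in $l$, but then establish it only at the stochastic-geometry level for the exact SIR distribution --- which does not by itself transfer to the approximate expressions, since the Alzer bound and (for ZF) the additional replacement of $\delta_k^2$ by its mean $k/K$ inside $\mathcal{A}(\cdot)$ are applied separately for each $k$ and could in principle break the ordering. This is precisely where the paper does most of its work: it constructs auxiliary variables $I_3$, $I_4$, $I_5$ whose Laplace transforms reproduce (\ref{eqn:app1}) and (\ref{eqn:app2}) exactly, so that each approximate bound is again an expectation of a monotone coverage-type functional $\mathbb{E}\bigl[1-(1-e^{-c_k I})^{T}\bigr]$ and the comparison between consecutive $k$ can be carried out. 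Your coupling idea can be salvaged along the same lines, but as written the proposal proves monotonicity of the wrong (exact rather than approximate) objects and therefore does not yet establish convexity of \textbf{P1} as stated.
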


\begin{proof}
See Appendix D.
\end{proof}

By using KKT condition, the optimal solutions of $\textbf{P1}$ satisfy the conditions as follows.
\begin{theorem}
The optimal cache probabilities of $\textbf{P1}$ satisfy£º
\begin{align}
b_n(\mu^*)=\text{min}\left(1,[w_n(\mu^*)]^+\right),  \label{eqn:11111}
\end{align}
where $[z]^+=\text{max}(z,0)$, $\mu^*\geq0$ is the optimal dual variable satisfying the cache size constraint (\ref{eqn:constraint}) and $w_n(\mu^*)$ is the real root of the function:
\begin{align}
p_n\sum_{k=1}^K[1-w_n(\mu^*)]^{k-2}[1-kw_n(\mu^*)]P_{\text{cov}}^k(K)-\mu^*=0. \label{eqn:sum}
\end{align}
\end{theorem}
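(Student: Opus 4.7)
The plan is to apply standard KKT conditions to problem \textbf{P1}. By Lemma 4 it is convex, the objective is differentiable and all constraints are affine, so Slater's condition holds and the KKT conditions are both necessary and sufficient. As already argued in the text preceding the theorem, the cache-size constraint (\ref{eqn:contraint1}) can be tightened to the equality $\sum_n b_n = M$ without loss of optimality, so I work with that form.

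First, I introduce a single multiplier $\mu$ for the equality constraint and multipliers $\nu_n,\xi_n \geq 0$ for $b_n \leq 1$ and $-b_n \leq 0$. The Lagrangian of the (equivalent) minimization is
\begin{equation*}
\mathcal{L} = -\sum_{n=1}^N p_n \sum_{k=1}^K b_n(1-b_n)^{k-1} P_{\text{cov}}^k(K) + \mu\Bigl(\sum_{n=1}^N b_n - M\Bigr) + \sum_{n=1}^N \nu_n(b_n-1) - \sum_{n=1}^N \xi_n b_n.
\end{equation*}
A routine differentiation of $b_n(1-b_n)^{k-1}$ gives $(1-b_n)^{k-2}(1-kb_n)$, so the stationarity condition $\partial\mathcal{L}/\partial b_n = 0$ reads
\begin{equation*}
-p_n\sum_{k=1}^K (1-b_n)^{k-2}(1-kb_n) P_{\text{cov}}^k(K) + \mu + \nu_n - \xi_n = 0.
\end{equation*}

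Second, I invoke complementary slackness to split into three regimes. For $b_n \in (0,1)$ both $\nu_n$ and $\xi_n$ vanish, and the stationarity equation reduces exactly to (\ref{eqn:sum}); denote its root by $w_n(\mu)$. For $b_n=0$ we need $\xi_n \geq 0$, which translates into $w_n(\mu) \leq 0$, while for $b_n=1$ we need $\nu_n \geq 0$, which translates into $w_n(\mu) \geq 1$. Combining these three cases gives the closed-form projection $b_n(\mu^*) = \min\bigl(1,[w_n(\mu^*)]^+\bigr)$ stated in (\ref{eqn:11111}), where $\mu^*$ is chosen so that $\sum_n b_n(\mu^*) = M$.

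The delicate step, and the one I would spend the most care on, is confirming that the root $w_n(\mu)$ is well-defined and unique. The argument I would give is that the per-file objective term $\varphi_n(b) \triangleq p_n\sum_k b(1-b)^{k-1} P_{\text{cov}}^k(K)$ is strictly concave on $[0,1]$ (this is precisely what Appendix D establishes when proving Lemma 4), so its derivative is strictly decreasing in $b$ and (\ref{eqn:sum}) admits at most one real root for any fixed $\mu$. Finally, $\mu^*$ itself is obtained from primal feasibility: the mapping $\mu \mapsto \sum_n b_n(\mu)$ is continuous and monotonically nonincreasing (since each $b_n(\mu)$ is), so a simple bisection over $\mu \geq 0$ recovers the unique $\mu^*$ that enforces $\sum_n b_n(\mu^*) = M$, completing the proof and simultaneously yielding a practical algorithm for computing the optimal probabilities.
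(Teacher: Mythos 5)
Your proposal is correct and follows essentially the same route as the paper's Appendix E: both apply KKT conditions to the convex problem, derive the stationarity equation $p_n\sum_{k=1}^K(1-b_n)^{k-2}(1-kb_n)P_{\text{cov}}^k(K)=\mu$, use the monotonicity of its left-hand side in $b_n$ (from concavity) to identify the unique root, and clip to $[0,1]$ with $\mu^*$ fixed by $\sum_n b_n(\mu^*)=M$. The only difference is presentational: you carry explicit multipliers $\nu_n,\xi_n$ for the box constraints and invoke complementary slackness, whereas the paper absorbs those cases into threshold conditions on $\mu$ evaluated at $b_n=0$ and $b_n=1$.
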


\begin{proof}
See Appendix E.
\end{proof}
To obtain the optimal cache strategy, we should find the optimal dual variable $\mu^*$ by substituting (\ref{eqn:11111}) into the cache size constraint $\sum_{n=1}^Nb_n(\mu^*)=M$. It is observed that $b_n(\mu)$ is a decreasing function of $\mu$. Thus, the sum of $b_n(\mu)$ is also decreasing. Therefore, we can use the bisection method to find the optimal $\mu^*$. The algorithm is given in Algorithm \ref{alg:1}.
\begin{algorithm}
\caption{A Bisection Method to Finding optimal cache strategy} \label{alg:1}
\begin{algorithmic}[1]
\STATE Initialize the upper and lower bound of $\mu$ as $\mu_u=p_1\sum_{k=1}^KP_{cov}^k(K)$ and $\mu_l=p_N\left[P_{cov}^1(K)-P_{cov}^2(K)\right]$
\WHILE{$\mu_u-\mu_l \geq \epsilon (\text{error tolerance level})$}
\STATE $\mu^*=(\mu_u+\mu_l)/2$
\IF{$\sum_{n=1}^Nb_n(\mu^*) < M $}
\STATE update $\mu_u=\mu^*$
\ELSE
\STATE update $\mu_l=\mu^*$
\ENDIF
\ENDWHILE
\FOR{$n=1$:$N$}
\STATE $b_n(\mu^*)=\text{min}(1,[w_n(\mu^*)]^+)$
\ENDFOR
\end{algorithmic}
\end{algorithm}
\section{Simulation Results and Discussion}
In this section, we first validate the tightness of the approximate coverage probability in both MF and ZF schemes by simulations. Next, we analyze the effects of number of antennas, and Zipf parameter on STP with OPC and compare the performance between ZF and MF. Besides, comparison with MPC is given in numerical results. Finally, the optimal cache probabilities are proposed in both MF and ZF schemes.

Simulations are performed in a square area of $4000\times4000$ $\text{m}^2$. The file popularity is modeled as the Zipf distribution, i.e., $p_n=\frac{1/n^\delta}{\sum_{j=1}^N1/j^\delta}$ for file $f_n$ with $\delta$ being Zipf skewness parameter. Unless otherwise stated, the simulation parameters are set as follows:
SBS intensity $\lambda_b=5\times10^{-5}/ \text{m}^2$, path loss exponent $\alpha=4$, number of total files $N=100$, cache size $M=10$, SIR target $\gamma=0$~dB and Zipf skewness parameter $\delta=0.9$.

\subsection{Validation of Analytical Results}
From Fig. \ref{fig:MF} and Fig. \ref{fig:BF}, we observe that the simulation and analytical results of the coverage probabilities match well for both MF and ZF beamforming. It is also seen that (\ref{eqn:app1}) and (\ref{eqn:app2}) are very tight. As such, we shall use them to approximate the coverage probabilities to find the optimal cache probabilities in the optimization problem $\mathbf {P1}$.
\subsection{Number of antennas}
Fig. \ref{fig:Antennas} illustrates the STP with the optimized caching probabilities for different number of antennas. It is seen that when the cluster size $K$ is fixed, increasing the number of antennas increases the STP for both MF and ZF schemes but the gain diminishes as $L$ grows. It is also seen that when $L>K$ (e.g. $L=3, 4$), ZF beamforming always outperforms MF beamforming at all the considered SIR targets. Otherwise when $L=K$, MF beamforming performs better than ZF beamforming at most SIR threshold (e.g. $\gamma<10$dB) and they are almost the same only at high SIR threshold  (e.g. $\gamma>10$dB). This is because when the number of antennas equals the cluster size, the effective channel gain of the desired signal with ZF beamforming is much smaller than that of MF beamforming although the former suffers less interference. However, when $L$ is larger than $K$, the SBS has enough spatial dimensions to null out the intra-cluster interference and strengthen the effective channel gain of the desired signals simultaneously. Therefore, ZF outperforms MF.
\begin{figure}[t]
\begin{centering}
\vspace{-0.2cm}
\includegraphics[scale=.6]{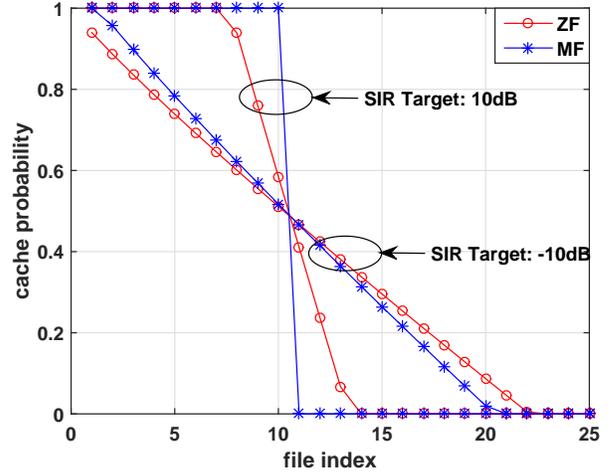}
\vspace{-0.1cm}
\caption{\small{Cache Probability with $K=2$ and $L=4$.}} \label{fig:cache_probability}
\end{centering}
\vspace{-0.3cm}
\end{figure}
\subsection{Zipf Parameter}
In Fig. \ref{fig:delta}, it is observed that STP grows when Zipf parameter $\delta$ becomes larger. This is because larger $\delta$ means few popular files are requested with a larger probability. Besides, the performance of MPC gets closer to that of OPC when $\delta$ grows. When $\delta > 1.8$, OPC performs almost identically to MPC. Moreover, it is observed that the gap between ZF and MF in OPC becomes smaller when $\delta$ increases. This is because the coverage probability of the user not served by the nearest SBS in ZF is larger than that in MF. For small $\delta$, the file popularity distribution is more like uniform distribution, which causes that SBSs prefer to cache more number of different files and users have a larger probability to connect to farther SBSs. In this case, ZF can make a better utilization of the multiple SBSs. While for large $\delta$, OPC degenerates to MPC, and thus users have a high probability to connect with the nearest SBS and the performances of ZF and MF become closer.
\subsection{Optimal Cache Strategy}
Fig. \ref{fig:cache_probability} shows that at the high SIR target with $\gamma=10$dB, the typical user most likely can only connect to the nearest SBS, and thus the most popular caching strategy is almost optimal. When the SIR target is small with $\gamma=-10$dB, the typical user can be served by farther SBSs. Therefore, caching more number of different files is better. The optimal cache solutions in MF are closer to MPC compared to ZF since the users not served by the nearest SBS in MF case suffers strong interference and thus users are more willing to connect to the nearest SBS.

\section{Conclusions}
In this work, we focus on the performance of ZF and MF beamforming in multi-antenna cache-enabled SCNs. Tractable and closed-form approximate expressions of the STP in both MF and ZF schemes are obtained by using tools from stochastic geometry. We then formulate the optimal probabilistic caching problem for maximizing the STP, which is proven to be a convex optimization. This problem is solved and the closed-form optimal cache solutions are obtained. We numerically analyze the effects of the number of antennas and Zipf parameter on the STP and make comparisons between ZF and MF beamforming. Numerical results also reveal that ZF outperforms MF when the number of antennas is larger than the cluster size and MF performs better when the number of antennas equals the cluster size.

\begin{appendices}
\section{Proof of Lemma $1$}
The coverage probability can be written as:
\begin{align}
\allowdisplaybreaks
P_{\text{cov,mf}}^{k}(K)&=P\left[\frac{g_{k,\text{mf}}\cdot r_k^{-\alpha}}{\sum_{j\in\Phi_b\backslash
\{\textbf{d}_k\}} g_{j,\text{mf}}\cdot r_j^{-\alpha}} \geq \gamma\right] \nonumber \\
&=\mathbb{E}_{r_k,I_r}\left[P\left[g_{k,\text{mf}} \geq \gamma r_k^{\alpha} I_r\right]|r_k,I_r\right] \nonumber \\
&\overset{(a)}{=}\mathbb{E}_{r_k,I_r}\left[\sum_{i=0}^{L-1}\frac{(\gamma r_k^{\alpha} I_r)^i}{i!}e^{-\gamma r_k^{\alpha} I_r}|r_k,I_r\right] \nonumber \\
&\overset{(b)}{=}\mathbb{E}_{r_k}\left[\sum_{i=0}^{L-1}\frac{(-\gamma {r_k}^\alpha)^i}{i!}\mathcal{L}_{I_r}^{(i)}(\gamma {r_k}^\alpha)|r_k\right],
\end{align}
where step (a) follows from the series expansion of the CCDF $\overline{F}(x;m,\theta)$ for Gamma distribution $\Gamma(m,\theta)$ when $\theta$ is a positive integer, i.e., $\overline{F}(x;m,\theta)=\sum_{i=0}^{m-1}\frac{1}{i!}(\frac{x}{\theta})^i e^{-\frac{x}{\theta}}$; and (b) follows from the derivative property of the Laplace transform: $\mathbb{E}[X^ie^{-sX}]=(-1)^i \mathcal{L}_X^{(i)}(s)$.

The interference $I_r$ consists of two parts, the interference $I_1$ from the $k-1$ SBSs closer to $u_0$ than the serving SBS $\textbf{d}_k$ and the interference $I_2$ from all the SBSs farther than $\textbf{d}_k$. Therefore, the Laplace transform of interference $\mathcal{L}_{I_r}(s)$ can be given by the product of $\mathcal{L}_{I_1}(s)$ and $\mathcal{L}_{I_2}(s)$.
The Laplace transform of $I_1$ is given by:
\begin{align}
\mathcal{L}_{I_1}(s)&=\mathbb{E}_{\Phi_b,g_{j,\text{mf}}}\left[\prod_{j\in\Phi_b\bigcap \mathcal{B}(0,r_k) \backslash \{\textbf{d}_k\}}\exp\left(-sg_{j,\text{mf}}\cdot r_j^{-\alpha}\right)\right] \nonumber \\
&\overset{(a)}{=}\mathbb{E}_{\Phi_b}\left[\prod_{j\in\Phi_b\bigcap \mathcal{B}(0,r_k) \backslash \{\textbf{d}_k\}}\frac{1}{1+sr_j^{-\alpha}}\right]  \nonumber \\
&\overset{(b)}{=}\left(\int_0^{r_k} \frac{1}{1+sr^{-\alpha}}\frac{2r}{r_k^2}dr\right)^{k-1}, \label{eqn:L1}
\end{align}
where step (a) follows from the i.i.d. distribution of $g_{j,\text{mf}}$ and its further independence from $\Phi_b$, and step (b) follows since these $k-1$ SBSs are i.i.d. and uniformly distributed in the circle $\mathcal{B}(0,r_k)$ and the pdf of the distance between these SBSs and the typical user $u_0$ is given by:
\begin{align}
f_{R}(r)=
\begin{cases}
\frac{2r}{r_k^2},&0\leq r \leq r_k \\
0,&r_k \leq r
\end{cases}.
\end{align}

The Laplace transform of $I_2$ is given by:
\begin{align}
\mathcal{L}_{I_2}(s)
&=\mathbb{E}_{\Phi_b}\left[\prod_{j\in\Phi_b\backslash \mathcal{B}(0,r_k)}\frac{1}{1+sr_j^{-\alpha}}\right]  \nonumber \\
&=\exp\left(-2\pi\lambda_b\int_{r_k}^{\infty}\frac{sr^{-\alpha}}{1+sr^{-\alpha}}rdr\right), \label{eqn:L2}
\end{align}
where the last step follows from the probability generating functional (PGFL) of the HPPP.
\section{Proof of Theorem $1$}
The proof relies on the following lemma.
\begin{lemma} [Alzer's Inequality \cite{huang2007space, alzer1997some}]

If $Z$ is a random variable following the Gamma distribution $Z\sim\Gamma(T,1)$, the cumulative distribution function (CDF) $F_Z(z)=P[Z \leq z]$ is bounded by:
\begin{align}
(1-e^{-a z})^T \leq F_Z(z) \leq (1-e^{-z})^T,
\end{align}
where $a=(T!)^{-\frac{1}{T}}$.
\end{lemma}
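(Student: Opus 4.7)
The plan is to treat the upper and lower bounds separately, exploiting the representation $Z = X_1 + \cdots + X_T$ with $X_i$ i.i.d.\ $\exp(1)$, which is valid since in the application $T = L$ is a positive integer.

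For the upper bound $F_Z(z) \leq (1-e^{-z})^T$, I would use a pure order-statistic argument. Since the $X_i$ are non-negative, $\max_i X_i \leq \sum_i X_i = Z$ pointwise, so $\{Z \leq z\} \subseteq \{\max_i X_i \leq z\}$. Taking probabilities and using independence of the $X_i$ immediately gives $F_Z(z) \leq P(\max_i X_i \leq z) = \prod_i P(X_i \leq z) = (1-e^{-z})^T$.

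For the lower bound $(1-e^{-az})^T \leq F_Z(z)$ with $a = (T!)^{-1/T}$, the plan is analytical. Define $\phi(z) = \log F_Z(z) - T \log(1-e^{-az})$ and aim to show $\phi(z) \geq 0$ for all $z > 0$. The boundary behaviour is the cleanest entry point: as $z \to 0^{+}$, the series expansions $F_Z(z) \sim z^T/T!$ and $(1-e^{-az})^T \sim (az)^T = a^T z^T = z^T/T!$ (precisely by the choice of $a$) yield $\phi(0^{+}) = 0$; and as $z \to \infty$ both $F_Z(z)$ and $(1-e^{-az})^T$ tend to $1$, so $\phi(\infty) = 0$. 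Next, differentiate to obtain $\phi'(z) = f_Z(z)/F_Z(z) - Ta\,e^{-az}/(1-e^{-az})$, a difference between the hazard rate of $Z$ and a scaled exponential hazard. The decisive step is to prove that $\phi'$ has exactly one sign change on $(0,\infty)$; together with the two vanishing boundary values, this forces $\phi \geq 0$ throughout.

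The main obstacle will be verifying this single-sign-change property, since the gamma hazard $f_Z/F_Z$ and the modified exponential hazard both approach $1$ at infinity, so one must compare their approach rates carefully. This comparison is the technical heart of Alzer's original argument in \cite{alzer1997some} and would likely require monotonicity of an auxiliary function built from incomplete gamma series. A sanity check that pins the constant is that $a = (T!)^{-1/T}$ is the largest admissible value: any larger $a$ would make $(az)^T$ strictly exceed $z^T/T!$ to leading order as $z \to 0^{+}$, violating the inequality near the origin, so this choice is both necessary and, by the argument above, sufficient.
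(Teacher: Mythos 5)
The paper does not prove this lemma at all: it is quoted from the literature (Alzer's inequality, \cite{alzer1997some}, as applied in \cite{huang2007space}), so there is no in-paper argument to compare against. Judged on its own, your upper bound is complete and correct: writing $Z=\sum_{i=1}^{T}X_i$ with $X_i$ i.i.d.\ $\exp(1)$ and using $\max_i X_i\le Z$ pointwise gives $F_Z(z)\le(1-e^{-z})^T$ in two lines, which is the standard elementary route.

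The lower bound, however, has a genuine gap. You reduce it to the claim that $\phi'(z)=f_Z(z)/F_Z(z)-Ta\,e^{-az}/(1-e^{-az})$ changes sign exactly once on $(0,\infty)$, but you do not prove this claim; you explicitly defer it to ``Alzer's original argument,'' and that single-crossing property of the two rate functions is precisely the nontrivial content of the lemma. Moreover, even granting a single sign change, the conclusion $\phi\ge 0$ (rather than $\phi\le 0$) requires knowing that $\phi'$ is positive first, i.e.\ that $\phi$ is positive near the origin. Your boundary check only matches the leading order, where $F_Z(z)\sim z^T/T!$ and $(1-e^{-az})^T\sim(az)^T$ agree exactly by the choice of $a$, so the sign near $0$ is decided at the next order: one finds $\phi(z)=Tz\left(\frac{a}{2}-\frac{1}{T+1}\right)+O(z^2)$, so you additionally need $(T!)^{-1/T}\ge\frac{2}{T+1}$, equivalently $(T+1)^T\ge 2^T\,T!$, which does hold (by AM--GM, $T!\le\left(\frac{T+1}{2}\right)^T$) but is nowhere verified in your writeup. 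Your observation that $a=(T!)^{-1/T}$ is the largest admissible constant is a correct sanity check, but it establishes only necessity of the constant, not the inequality itself. As written, the lower half of the lemma remains unproven.
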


By Lemma 6, the CCDF of $Z$ can be upper bounded by:
\begin{align}
\overline{F}_Z(z)&=P[Z \geq z] \nonumber \\
&\leq 1-(1-e^{-a z})^T \nonumber \\
&=\sum_{i=1}^T \binom{T}{i}(-1)^{i+1}e^{-a zi}.
\end{align}

Therefore, the coverage probability in MF can be upper bounded by:
\begin{align}
P_{\text{cov,mf}}^{k}(K)&=\mathbb{E}_{r_k,I_r}\left[P\left[g_{k,\text{mf}} \geq \gamma r_k^{\alpha} I_r\right]|r_k,I_r\right] \nonumber \\
&\leq \sum_{l=1}^{L}\binom{L}{l}(-1)^{l+1} \mathbb{E}_{r_k,I_r}\left[e^{-\eta \gamma r_k^{\alpha} I_r l}|r_k,I_r\right] \nonumber \\
&=\sum_{l=1}^{L}\binom{L}{l}(-1)^{l+1} \mathbb{E}_{r_k}\left[\mathcal{L}_{I_r}(\eta \gamma r_k^{\alpha}l)|r_k\right]. \label{eqn:22}
\end{align}

In order to simplify the expression (\ref{eqn:22}), we make the mathematical transforms of $\mathcal{L}_{I_r}(s)$:
\begin{align}
\mathcal{L}_{I_r}(s)&=\left(\int_0^{r_k} \frac{1}{1+sr^{-\alpha}}\frac{2r}{r_k^2}dr\right)^{k-1} \nonumber \\
&~~~\times\exp\left(-2\pi\lambda_b\int_{r_k}^{\infty}\frac{sr^{-\alpha}}{1+sr^{-\alpha}}rdr\right) \nonumber \\
&=\left[1-\frac{2 s^{2/\alpha}}{\alpha r_k^2 }B\left(\frac{2}{\alpha},1-\frac{2}{\alpha},\frac{1}{1+sr_k^{-\alpha}}\right)\right]^{k-1} \nonumber \\
&~~~\times \exp\left[-2\pi\lambda_b \frac{s^{\frac{2}{\alpha}}}{\alpha}B^{'}\left(\frac{2}{\alpha},1-\frac{2}{\alpha},\frac{1}{1+sr_k^{-\alpha}}\right)\right],
\end{align}
where the last step follows by first replacing $s^{-\frac{1}{\alpha}}r$ with $u$, then replacing $\frac{1}{1+u^{-\alpha}}$ with $v$.

Therefore, the Laplace transform in (\ref{eqn:22}) can be written as:
\begin{align}
&\mathcal{L}_{I_r}(\eta \gamma r_k^{\alpha}l)=\left[1-\frac{2 (\eta \gamma l)^{2/\alpha}}{ \alpha}B\left(\frac{2}{\alpha},1-\frac{2}{\alpha},\frac{1}{1+\eta \gamma l}\right)\right]^{k-1} \nonumber\\
&~~~~~\times \exp\left[-2\pi\lambda_b \frac{r_k^2(\eta \gamma l)^{\frac{2}{\alpha}}}{\alpha} B^{'}\left(\frac{2}{\alpha},1-\frac{2}{\alpha},\frac{1}{1+\eta \gamma l}\right)\right] \nonumber \\
&~~~~~~~~~~~~~=\beta_1(\eta,\gamma,\alpha,l,k) \exp\left(-\pi\lambda_b r_k^2 \beta_2\left(\eta,\gamma,\alpha,l\right)\right), \label{eqn:laplace1}
\end{align}
where $\beta_1(\eta,\gamma,\alpha,l,k)$ and $\beta_2(\eta,\gamma,\alpha,l)$ are defined for notation simplicity, given by (\ref{eqn:beta1}) and (\ref{eqn:beta2}), respectively with $x=\eta$.

Then we need to calculate the expectation of $\mathcal{L}_{I_r}(\eta \gamma r_k^{\alpha}l)$ over $r_k$. It is observed that $\beta_1(\eta,\gamma,\alpha,l,k)$ can be seen as a constant and we only need to evaluate the expectation of $\exp(-\pi\lambda_b r_k^2 \beta_2(\eta,\gamma,\alpha,l))$ over $r_k$.
\begin{align}
&\mathbb{E}_{r_k}\left[\exp\left(-\pi\lambda_b r_k^2 \beta_2(\eta,\gamma,\alpha,l)\right)\right] \nonumber\\
&=\int_0^\infty \exp(-\pi\lambda_b r_k^2 \beta_2(\eta,\gamma,\alpha,l)) \frac{2(\lambda_b\pi r_k^2)^k}{r_k\Gamma(k)}\exp(-\lambda_b \pi r_k^2) dr_k \nonumber \\
&\overset{(a)}{=}\int_0^\infty \left[\frac{z}{1+\beta_2(\eta,\gamma,\alpha,l)}\right]^{k-1}\times\frac{e^{-z}}{\Gamma(k)\left(1+\beta_2(\eta,\gamma,\alpha,l)\right)}dz \nonumber \\
&\overset{(b)}{=}\left[\frac{1}{1+\beta_2(\eta,\gamma,\alpha,l)}\right]^k, \label{eqn:expectation}
\end{align}
where step (a) follows from the change of variables
\begin{equation}
z=\pi \lambda_b r_k^2 (1+\beta_2(\eta,\gamma,\alpha,l))  \label{eqn:change},
\end{equation}
and step (b) follows from the Gamma distribution property
\begin{equation}
\int_0^\infty t^k e^{-\lambda t}dt=\frac{k!}{\lambda^{k+1}}.
\end{equation}

By substituting (\ref{eqn:expectation}) into (\ref{eqn:22}), we obtain the upper bound of the coverage probability (\ref{eqn:app1}). The lower bound (\ref{eqn:app11}) can be similarly proved by letting $\eta=1$ in the above derivations. This theorem is thus proved.
\section{Proof of Theorem $2$}
With the similar methods used in the proof of Theorem 1, the coverage probability in ZF can be upper bounded by:
\begin{align}
P_{\text{cov,zf}}^{k}(K) &\leq \sum_{l=1}^{L-K+1}\binom{L-K+1}{l}(-1)^{l+1}    \nonumber\\
&~~~\times \mathbb{E}_{r_k,r_K}[\mathcal{L}_{I_r}(\kappa \gamma r_k^{\alpha}l)|r_k,r_K], \label{eqn:upper bound}
\end{align}
where $I_r$ is the inter-cluster interference from the SBSs farther than $\textbf{d}_k$. $\mathcal{L}_{I_r}(s)$ can be obtained by substituting $r_K$ for $r_k$ in (\ref{eqn:L2}) and is given by:
\begin{align}
\mathcal{L}_{I_r}(s)&=\exp\left[-2\pi\lambda_b\int_{r_K}^{\infty}\frac{sr^{-\alpha}}{1+sr^{-\alpha}}rdr\right].
\end{align}

By introducing a geometric parameter $\delta_k=\frac{r_k}{r_K}$, the Laplace transform in (\ref{eqn:upper bound}) can be written as:
\begin{align}
\mathcal{L}_{I_r}(\kappa \gamma r_k^{\alpha}l)&=\exp\left(-2\pi\lambda_b\int_{r_K}^\infty\frac{r^{-\alpha}\kappa \gamma r_k^{\alpha}l}{1+r^{-\alpha}\kappa \gamma r_k^{\alpha}l}rdr\right) \nonumber \\
&=\exp\left(-2\pi\lambda_b\int_{r_K}^\infty\frac{r}{1+(\frac{r}{r_K})^{\alpha}(\kappa \gamma \delta_k^{\alpha}l)^{-1}}dr\right) \nonumber \\
&=\exp\left(-\pi\lambda_b r_K^2(\kappa \gamma \delta_k^{\alpha}l)^{\frac{2}{\alpha}}\int_{(\kappa \gamma \delta_k^{\alpha}l)^{-\frac{2}{\alpha}}}^{\infty}\frac{1}{1+v^{\frac{\alpha}{2}}}dv\right),\label{eqn:Laplace}
\end{align}
where the last step follows from the change of variables
\begin{align}
v=\left[\left(\frac{1}{\kappa \gamma \delta_k^{\alpha}l}\right)^{\frac{1}{\alpha}}\frac{r}{r_K}\right]^2.
\end{align}

For notation simplicity, we let
\begin{align}
\beta_3(\kappa \gamma \delta_k^{\alpha}l,\alpha)=(\kappa \gamma \delta_k^{\alpha}l)^{\frac{2}{\alpha}}\int_{(\kappa \gamma \delta_k^{\alpha}l)^{-\frac{2}{\alpha}}}^{\infty}\frac{1}{1+v^{\frac{\alpha}{2}}}dv.  \label{eqn:beta3}
\end{align}

From (\ref{eqn:Laplace}), it is observed that we need to calculate the expectation over $\delta_k$ and $r_K$, rather than $r_k$ and $r_K$ as in (\ref{eqn:upper bound}). Thus, we first calculate the expectation of (\ref{eqn:Laplace}) over $r_K$.
\begin{align}
&\mathbb{E}_{r_K}\left[\mathcal{L}_{I_r}\left(\kappa \gamma (\delta_k r_K)^{\alpha}l\right)|\delta_k,r_K\right] \nonumber \\
&~~~~=\int_0^\infty \exp(-\pi\lambda_b r_K^2 \beta_3(\kappa \gamma \delta_k^{\alpha}l,\alpha)) \nonumber \\
&~~~~~~~\times \frac{2(\lambda_b\pi r_K^2)^K}{r_K\Gamma(K)}\exp(-\lambda_b \pi r_K^2) dr_K  \nonumber\\
&~~~~=\frac{1}{\left[1+\beta_3(\kappa \gamma \delta_k^{\alpha}l,\alpha)\right]^K}, \label{eqn:beta33}
\end{align}
where the last step follows from change of variables similar to (\ref{eqn:change}).

Therefore, $P_{\text{cov,zf}}^{k,\text{u}}(K)$ can be rewritten as:
\begin{align}
P_{\text{cov,zf}}^{k,\text{u}}(K)=\mathbb{E}_{\delta_k}\left[\sum_{l=1}^{L-K+1} \frac{\binom{L-K+1}{l}(-1)^{l+1}}{\left[1+\beta_3(\kappa \gamma \delta_k^{\alpha}l,\alpha)\right]^K}\right]. \label{eqn:upper bound1}
\end{align}

To obtain the expectation above over $\delta_k$, we need to know the pdf of $\delta_k$. Utilizing the joint pdf of $r_k$ and $r_K$ given in (\ref{eqn:joint pdf}), the CDF of $\delta_k$ is given by:
\begin{align}
P[\delta_k \leq x]&=P[r_k \leq xr_K]   \nonumber\\
&=\int_0^\infty \int_0^{xr_K} f_{R_k,R_K}(r_k,r_K)dr_kdr_K \nonumber\\
&=\int_0^\infty \int_0^{xr_K}\frac{4r_k r_k^{2(k-1)}r_K}{\Gamma(K-k)\Gamma(k)}(\lambda_b\pi)^K  \nonumber\\
&~~~\times (r_K^2-r_k^2)^{K-k-1}\exp(-\lambda_b \pi r_K^2)dr_kdr_K \nonumber\\
&=1-\sum_{i=0}^{k-1}\frac{(K-1)!x^{2(k-1-i)}(1-x^2)^{K-k+i}}{(K-k+i)!(k-1-i)!}, \label{eqn:CDF}
\end{align}
where $0\leq x \leq 1$. Then, the pdf of $\delta_k$ can be obtained as
\begin{align}
f_{\delta_k}(x)&=\frac{dP[\delta_k \leq x]}{dx} \nonumber\\
&=\sum_{i=0}^{k-1} \frac{(K-1)!\left[(K-1)x^2-(k-i-1)\right]}{(K-k+i)!(k-1-i)!} \nonumber\\
&~~~\times 2x^{2(k-1-i)-1}(1-x^2)^{K-k+i-1} \nonumber\\
&=\frac{2(K-1)!}{(k-1)!(K-k-1)!}x^{2k-1}(1-x^2)^{K-k-1}.
\end{align}

Recall (\ref{eqn:beta3}), we approximate the integral in it as a constant value according to the randomness of $\delta_k$, which is given by:
\begin{align}
\mathbb{E}\left[\int_{\delta_k^{-2}(\kappa \gamma l)^{-\frac{2}{\alpha}}}^{\infty}\frac{1}{1+v^{\frac{\alpha}{2}}}dv\right]\simeq \sqrt{\frac{k}{K}} \mathcal{A}\left(\frac{\sqrt K (\kappa \gamma l)^{-\frac{2}{\alpha}}}{\sqrt k }\right),
\end{align}
which follows from the expectation of $\delta_k^2$:
\begin{align}
\mathbb{E}(\delta_k^2)&=\int_0^1 x^2f_{\delta_k}(x)dx  \nonumber \\
&=\int_0^1\frac{2 x^2 (K-1)!}{(k-1)!(K-k-1)!}x^{2k-1}(1-x^2)^{K-k-1}dx \nonumber \\
&=\frac{k}{K}.
\end{align}

Thus, we can approximate $\beta_3(\kappa \gamma \delta_k^{\alpha}l,\alpha)$ as:
\begin{align}
\beta_3(\kappa \gamma \delta_k^{\alpha}l,\alpha)\simeq \delta_k^2(\kappa \gamma l)^{\frac{2}{\alpha}}\sqrt{\frac{k}{K}} \mathcal{A}\left(\frac{\sqrt K (\kappa \gamma l)^{-\frac{2}{\alpha}}}{\sqrt k }\right). \label{eqn:appbeta3}
\end{align}

By substituting (\ref{eqn:appbeta3}) into (\ref{eqn:beta33}), the expectation of (\ref{eqn:beta33}) over $\delta_k$ can be approximated as:
\begin{align}
&\mathbb{E}_{\delta_k}\left[\left(\frac{1}{1+\beta_3(\kappa \gamma \delta_k^{\alpha}l,\alpha)}\right)^K\right] \nonumber \\
&=\int_0^1\left[\frac{1}{1+\beta_3(\kappa \gamma x^{\alpha}l,\alpha)}\right]^K f_{\delta_k}(x)dx \nonumber \\
&\simeq\int_0^1\frac{f_{\delta_k}(x)}{\left[1+(\kappa \gamma l)^{\frac{2}{\alpha}}\sqrt{\frac{k}{K}} \mathcal{A}\left(\frac{\sqrt K (\kappa \gamma l)^{-\frac{2}{\alpha}}}{\sqrt k }\right)x^2\right]^K} dx \nonumber \\
&=\frac{1}{{\left[1+(\kappa \gamma l)^{\frac{2}{\alpha}}\sqrt{\frac{k}{K}} \mathcal{A}\left(\frac{\sqrt K (\kappa \gamma l)^{-\frac{2}{\alpha}}}{\sqrt k }\right)\right]^k}}.  \label{eqn:Edelta}
\end{align}

Substituting (\ref{eqn:Edelta}) into (\ref{eqn:upper bound1}), we obtain the approximate upper bound of the coverage probability (\ref{eqn:app2}). The approximate lower bound (\ref{eqn:app22}) can be similarly proved by letting $\kappa=1$ in the above derivations. This theorem is thus proved.

\section{Proof of Lemma $5$}
Before proving that the optimization problem $\mathbf{P1}$ is convex, we first need to prove that the coverage probability $P_{\text{cov}}^k(K)$ is a non-increasing function with respect to $k$. This result is intuitive because the average SIR for the user served by the farther SBS is lower, which causes the smaller coverage probability. Next, we prove it mathematically.

In the ZF case, for ease of notation, we let
\begin{align}
X(s)=-2\pi\lambda_b\int_{r_K}^{\infty}\frac{sr^{-\alpha}}{1+sr^{-\alpha}}rdr.
\end{align}

Hence, the Laplace transform of the interference $I_r$ is $\mathcal{L}_{I_r}(s)=\exp\left[X(s)\right]$ and its $i$-th order derivative is given by:
\begin{align}
\mathcal{L}_{I_r}^{(i)}(s)=\sum_{m=0}^{i-1}\binom{i-1}{m}\mathcal{L}_{I_r}^{(m)}(s)X^{(i-m)}(s).
\end{align}

It is observed that the even order derivative of $X(s)$ is positive and its odd order derivative is negative, which can be proved easily by mathematical induction. Then we define a function $G(s)=\sum_{i=0}^{L-K}\frac{(-s)^i}{i!}\mathcal{L}_{I_r}^{(i)}(s)$ and its derivative with respect to s is given by:
%

\begin{align}
G^{'}(s)&=\sum_{i=0}^{L-K}\frac{(-1)^i}{i!}[is^{i-1}\mathcal{L}_{I_r}^{(i)}(s)+s^i\mathcal{L}_{I_r}^{(i+1)}(s)] \nonumber\\
&=(-1)^{L-K}s^{L-K}\mathcal{L}_{I_r}^{(L-K+1)}(s)  \nonumber\\
&\leq0,
\end{align}
which means $G(s)$ is non-increasing function with respect to $s$. Therefore, $G(\gamma r_k^{\alpha})$ is an non-increasing function with respect to $r_k$. Thus, for a fixed topology of SBSs in the plane, we always have $G(\gamma r_k^{\alpha})\geq G(\gamma r_{k+1}^{\alpha})$. Since the coverage probability $P_{\text{cov,zf}}^{k}(K)=\mathbb{E}_{r_k,r_K}[G(\gamma r_k^{\alpha})]$, (\ref{eqn:cov-zf}) is proven to be a non-increasing function with respect to $k$. For the MF case, the proof is similar, and hence is omitted here.


Then we prove the non-increasing property of the approximate coverage probability (\ref{eqn:app1}) and (\ref{eqn:app2}) with respect to $k$.

In the ZF case, we define a non-negative random variable $I_3$ similar to the interference $I_r$ and its Laplace transform is given by:
\begin{align}
\mathcal{L}_{I_3}(s)&=\exp\left(-2\pi\lambda_b\int_{r_k}^{\infty}\frac{sr^{-\alpha}}{1+sr^{-\alpha}}rdr\right).
\end{align}

Similar to (\ref{eqn:Laplace}), we have:
\begin{align}
&\mathcal{L}_{I_3}\left(\kappa \gamma (\delta_kr_K)^{\alpha}l\left(\frac{k}{K}\right)^{\frac{\alpha}{4}}\right)  \nonumber\\
&~~=\exp\left(-\pi\lambda_b r_K^2\delta_k^2(\kappa \gamma l)^{\frac{2}{\alpha}}\sqrt{\frac{k}{K}} \mathcal{A}\left(\frac{\sqrt K (\kappa \gamma l)^{-\frac{2}{\alpha}}}{\sqrt k }\right)\right) \nonumber\\
&~~=\exp\left(-\pi\lambda_b r_k^2(\kappa \gamma l)^{\frac{2}{\alpha}}\sqrt{\frac{k}{K}} \mathcal{A}\left(\frac{\sqrt K (\kappa \gamma l)^{-\frac{2}{\alpha}}}{\sqrt k }\right)\right),
\end{align}

Therefore, the approximate coverage probability in ZF case can be written as:
\begin{align}
P_{\text{cov,zf}}^{k,\text{u}}(K)&=\sum_{l=1}^{L-K+1}\binom{L-K+1}{l}(-1)^{l+1}    \nonumber\\
&~~~\times \mathbb{E}_{r_k}\left[\mathcal{L}_{I_3}\left(\kappa \gamma r_k^{\alpha}l\left(\frac{k}{K}\right)^{\frac{\alpha}{4}}\right)\bigg|r_k\right] \nonumber\\
&=\mathbb{E}_{r_k,I_3}\bigg[1-\bigg[1-\exp\bigg(-I_3\kappa \gamma r_k^{\alpha}    \nonumber\\
&~~~~~~~~~~~~~~~~~~~~\times\left(\frac{k}{K}\right)^{\frac{\alpha}{4}}\bigg)\bigg]^{L-K+1}\bigg|r_k,I_3\bigg],
\end{align}
which is a non-increasing function with respect to $k$. Hence, we conclude that:
\begin{align}
P_{\text{cov,zf}}^{k,\text{u}}(K)\geq\sum_{l=1}^{L-K+1} \frac{\binom{L-K+1}{l}(-1)^{l+1}}{{\left[1+\left(\kappa \gamma l\right)^{\frac{2}{\alpha}}\sqrt{\frac{k+1}{K}} \mathcal{A}\left(\frac{\sqrt K \left(\kappa \gamma l\right)^{-\frac{2}{\alpha}}}{\sqrt{k+1} }\right)\right]^k}}. \label{eqn:uneuqality1}
\end{align}

Then, we define a non-negative random variable $I_4$ similarly and its Laplace transform is given by:
\begin{align}
\mathcal{L}_{I_4}(s)&=\mathcal{L}_{I_3}(s)\times\int_{r_k}^\infty \frac{sr^{-\alpha}}{1+sr^{-\alpha}}\frac{2r}{r_k^2}dr   \nonumber \\
&~~/\left(1+\int_{r_k}^\infty \frac{sr^{-\alpha}}{1+sr^{-\alpha}}\frac{2r}{r_k^2}dr\right).
\end{align}

Hence, we have
\begin{align}
&\sum_{l=1}^{L-K+1} \frac{\binom{L-K+1}{l}(-1)^{l+1}}{{\left[1+\left(\kappa \gamma l\right)^{\frac{2}{\alpha}}\sqrt{\frac{k+1}{K}} \mathcal{A}\left(\frac{\sqrt K \left(\kappa \gamma l\right)^{-\frac{2}{\alpha}}}{\sqrt{k+1} }\right)\right]^k}}-P_{\text{cov,zf}}^{k+1,\text{u}}(K) \nonumber\\
&~~=\mathbb{E}_{r_k,I_4}\left[1-\left(1-e^{-I_{4}\kappa \gamma r_k^{\alpha}\left(\frac{k+1}{K}\right)^{\frac{\alpha}{4}}}\right)^{L-K+1} \right] \nonumber\\
&~~\geq 0. \label{eqn:uneuqality2}
\end{align}

Combining (\ref{eqn:uneuqality1}) and (\ref{eqn:uneuqality2}), we conclude that $P_{\text{cov,zf}}^{k,\text{u}}(K)\geq P_{\text{cov,zf}}^{k+1,\text{u}}(K)$, which means the approximate coverage probability in ZF is a non-increasing function with respect to $k$.

In the MF case, we define a random variable $I_5$ similarly and its Laplace transform is given by:
\begin{align}
\mathcal{L}_{I_5}(s)&=\left(\int_0^{r_k} \frac{1}{1+sr^{-\alpha}}\frac{2r}{r_k^2}dr\right)^{k-1} \times \mathcal{L}_{I_3}(s) \nonumber \\
&~~\times\int_0^\infty \frac{sr^{-\alpha}}{1+sr^{-\alpha}}\frac{2r}{r_k^2}dr/\left(1+\int_{r_k}^\infty \frac{sr^{-\alpha}}{1+sr^{-\alpha}}\frac{2r}{r_k^2}dr\right).
\end{align}

Hence, we have
\begin{align}
&P_{\text{cov,mf}}^{k,\text{u}}(K)-P_{\text{cov,mf}}^{k+1,\text{u}}(K)  \nonumber\\
&~~=\sum_{l=1}^{L} \beta_1(\eta,\gamma,\alpha,l,k)\frac{\binom{L}{l}(-1)^{l+1}}{\left[1+\beta_2(\eta,\gamma,\alpha,l)\right]^k} \nonumber\\
&~~~~~~\times\frac{1-\beta_1(\eta,\gamma,\alpha,l,2)+\beta_2(\eta,\gamma,\alpha,l)}{1+\beta_2(\eta,\gamma,\alpha,l)} \nonumber\\
&~~=\mathbb{E}_{r_k,I_5}\left[1-\left(1-e^{-I_{5}\eta \gamma r_k^{\alpha}}\right)^{L} \right] \nonumber\\
&~~\geq 0,
\end{align}
which means the approximate coverage probability in MF $P_{\text{cov,mf}}^{k,\text{u}}(K)$ is a non-increasing function with respect to $k$.

Therefore, we conclude that the exact and approximate coverage probability $P_{\text{cov}}^k(K)$ is a non-increasing function with respect to $k$ in both ZF and MF schemes.

Utilizing the non-increasing property of $P_{\text{cov}}^k(K)$ with respect to $k$ proved above, the second order derivative of the objective function STP with respect to $b_n$ is given by:
\begin{align}
&\frac{\partial^2 P_{\text{suc}}(K)}{\partial b_n^2} \nonumber\\
&=\sum_{n=1}^N p_n{\sum_{k=1}^K(k-1)(1-b_n)^{k-3}(kb_n-2)P_{\text{cov}}^k(K)} \nonumber\\
&=\sum_{n=1}^N p_n\Bigg[-2P_{\text{cov}}^2(K)+2(3b_n-2)P_{\text{cov}}^3(K) \nonumber\\
&~~~+\sum_{k=4}^K(k-1)(1-b_n)^{k-3}(kb_n-2)P_{\text{cov}}^k(K)\Bigg] \nonumber\\
&\leq \sum_{n=1}^N p_n\Bigg[6(b_n-1)P_{\text{cov}}^3(K)+3(1-b_n)(4b_n-2)P_{\text{cov}}^4(K) \nonumber\\
&~~~+\sum_{k=5}^K(k-1)(1-b_n)^{k-3}(kb_n-2)P_{\text{cov}}^k(K)\Bigg] \nonumber\\
&\cdots\cdots   \nonumber\\
&\leq K(K-1)(1-b_n)^{K-3}(b_n-1)P_{\text{cov}}^K(K) \nonumber\\
&\leq 0,
\end{align}
where the first $K-2$ inequalities follow from the non-increasing property of the coverage probability $P_{\text{cov}}^k(K)$ with respect to $k$. The last step follows from that $0\leq b_n\leq 1$. Thus, the objective function (\ref{eqn:objective}) we want to maximize is a concave function, and both constraints (\ref{eqn:contraint1}) and (\ref{eqn:contraint2}) are linear. Therefore, the proof is completed.
\section{Proof of Theorem $3$}
The Lagrangian function is given by:
\begin{align}
L(b_1,b_2,\cdots,b_N,\mu)&=\sum_{n=1}^N p_n{\sum_{k=1}^K b_n(1-b_n)^{k-1}P_{\text{cov}}^{k}(K)} \nonumber\\
&~~~+\mu\left(M-\sum_{n=1}^Nb_n\right),
\end{align}
where $u$ is the Lagrangian multiplier associated with the constraint (\ref{eqn:constraint}). The partial derivative of the Lagrangian function with respect to $b_n$ is given by:
\begin{align}
{\cal L}&=\frac{\partial L(b_1,b_2,\cdots,b_N,\mu)}{\partial b_n} \nonumber\\
&=p_n \sum_{k=1}^K(1-b_n)^{k-2}(1-kb_n)P_{\text{cov}}^k(K)-\mu.
\end{align}

By letting ${\cal L}=0$, we have
\begin{align}
p_n \sum_{k=1}^K(1-b_n)^{k-2}(1-kb_n)P_{\text{cov}}^k(K)=\mu. \label{eqn: Lagrangian}
\end{align}

It is observed that the left hand of (\ref{eqn: Lagrangian}) is a decreasing function with respect to $b_n$ since the objective function is concave. Notice that $0\leq b_n \leq 1$. Thus, when $b_n=1$, $\mu$ has the minimum value: $p_n\left[P_{\text{cov}}^1(K)-P_{\text{cov}}^2(K)\right]$. While for $b_n=0$, it has the maximum value: $p_n\sum_{k=1}^KP_{\text{cov}}^k(K)$.
Therefore, for a given Lagrangian multiplier $\mu$, the optimal cache solutions $b_n(\mu)$ are given by:
\begin{align}
b_n(\mu)=
\begin{cases}
1,&\mu \leq p_n\left[P_{\text{cov}}^1(K)-P_{\text{cov}}^2(K)\right] \\
w_n(\mu),&\text{otherwise}\\
0,&\mu \geq p_n\sum_{k=1}^KP_{\text{cov}}^k(K)
\end{cases}, \label{eqn:strategy}
\end{align}
which is equivalent to (\ref{eqn:11111}) by substituting $\mu^*$ for $\mu$ in (\ref{eqn:strategy}), which is the optimal dual variable satisfying the cache size constraint (\ref{eqn:constraint}). Hence, the proof is completed.
\end{appendices}
\bibliographystyle{IEEEtran}
\bibliography{IEEEabrv,paper1}
\end{document}